
\typeout{IJCAI--25 Instructions for Authors}


\documentclass{article}
\pdfpagewidth=8.5in
\pdfpageheight=11in

\usepackage{ijcai25}

\usepackage{times}
\usepackage{soul}
\usepackage{url}
\usepackage[hidelinks]{hyperref}
\usepackage[utf8]{inputenc}
\usepackage[small]{caption}
\usepackage{graphicx}
\usepackage{amsmath}
\usepackage{amsthm}
\usepackage{amssymb}
\usepackage{amsfonts}
\usepackage{booktabs}
\usepackage{algorithm}
\usepackage[noend]{algpseudocode}
\usepackage[switch]{lineno}

\usepackage{natbib}
\usepackage{cleveref}
\usepackage{bm}

\pagestyle{plain} 


\urlstyle{same}


\newtheorem{theorem}{Theorem}[section]

\newtheorem{lemma}[theorem]{Lemma}

\theoremstyle{definition}
\newtheorem{definition}[theorem]{Definition}

\newcommand{\cG}{\mathcal{G}}

\newcommand{\cZ}{\mathcal{Z}}

\newcommand{\tnu}{\tilde{\nu}}

\newcommand{\R}{\mathbb{R}}
\newcommand{\E}{\mathbb{E}}
\newcommand{\eps}{\epsilon}
\newcommand{\cD}{\mathcal{D}}

\newcommand{\var}{\mathrm{Var}}

\newcommand{\argmin}{\mathrm{argmin}}
\newcommand{\fav}{\mathrm{fav}}
\newcommand{\ind}{\mathbf{1}}
\newcommand{\algdiv}{\mathrm{ALG}_{\mathrm{div}}}
\newcommand{\algpropsmall}{\mathrm{ALG}^{\mathrm{PROP\text{-}small}}}

\algnotext{EndFor}
\algnotext{EndIf}
\algnotext{EndWhile}
\algnotext{EndProcedure}

\newcommand*{\diff}[1]{\mathop{}\!\mathrm{d}#1} 





\pdfinfo{
/TemplateVersion (IJCAI.2025.0)
}

\title{Asymptotic Fair Division: Chores Are Easier Than Goods}

\author{
Pasin Manurangsi$^1$\And
Warut Suksompong$^2$\\
\affiliations
$^1$Google Research, Thailand\\
$^2$National University of Singapore, Singapore\\
}

\allowdisplaybreaks

\begin{document}

\maketitle

\begin{abstract}
When dividing items among agents, two of the most widely studied fairness notions are envy-freeness and proportionality. 
We consider a setting where $m$ chores are allocated to $n$ agents and the disutility of each chore for each agent is drawn from a probability distribution.
We show that an envy-free allocation exists with high probability provided that $m \ge 2n$, and moreover, $m$ must be at least $n+\Theta(n)$ in order for the existence to hold.
On the other hand, we prove that a proportional allocation is likely to exist as long as $m = \omega(1)$, and this threshold is asymptotically tight.
Our results reveal a clear contrast with the allocation of goods, where a larger number of items is necessary to ensure existence for both notions.
\end{abstract}

\section{Introduction}
\label{sec:intro}

In applications of dividing items among agents, such as distributing teaching loads to faculty members or splitting household tasks between partners, fairness is often a key consideration \citep{BramsTa96,Moulin03}.
To formally reason about fairness, one must define what it means to be ``fair''.
Among the plethora of fairness notions proposed in the literature, two of the most widely used are envy-freeness and proportionality.
An allocation is called \emph{envy-free} if no agent prefers to swap her bundle of items with that of another agent, and \emph{proportional} if every agent receives no worse than her proportionally fair share, defined as $1/n$ of her value for the entire set of items with $n$ denoting the number of agents.

When the items to be allocated are \emph{indivisible}---that is, each item must be allocated in its entirety to one agent---an envy-free or proportional allocation may not exist.
For example, when $m$ valuable \emph{goods} are divided among $n > m$ agents, at least one agent will be left empty-handed; such an agent will be envious and deprived of her proportional share.
In light of this, \citet{DickersonGoKa14} asked the following natural question: if the agents' utilities for goods are drawn at random from a distribution, for which asymptotic relations between $m$ and $n$ does an envy-free allocation exist with high probability (meaning that the probability of existence approaches $1$ as $n\rightarrow\infty$)?
They showed that the asymptotic existence occurs when $m = \Omega(n\log n)$, but not when $m = n + o(n)$.
\citet{ManurangsiSu20,ManurangsiSu21} closed this gap by proving that if $m$ is divisible by $n$, an envy-free allocation is already likely to exist as long as $m\ge 2n$, whereas if $m$ is not ``almost divisible''\footnote{This means that the remainder upon dividing $m$ by $n$ is neither less than $n^{o(1)}$ nor greater than $n - n^{o(1)}$.} by $n$, existence is guaranteed only when $m = \Omega(n\log n/\log\log n)$.
For proportionality, \citet{ManurangsiSu21} showed that existence is likely whenever $m\ge n$; this is tight since no proportional allocation exists when $m < n$.

In this paper, we investigate asymptotic fair division of \emph{chores}, i.e., items that yield disutility to agents.
Due to its broad applicability---including the division of teaching loads, medical shifts, and household tasks---chore allocation has received significant attention in the fairness literature \citep{GuoLiDe23}.
As we shall demonstrate, the asymptotic results for chores are markedly different from those for goods, with respect to both envy-freeness and proportionality.

\subsection{Our Contributions}

In our model, each agent's disutility for each chore is drawn independently from a non-atomic distribution $\cD$ supported on $[0,1]$, and the agents' disutilities are additive across chores.
Following prior work in this domain, we assume for most of our results that $\cD$ is \emph{PDF-bounded}, meaning that its probability density function is bounded below and above by some positive constants $\alpha$ and~$\beta$, respectively \citep{ManurangsiSu21,BaiGo22,YokoyamaIg25}.
We describe the model formally in \Cref{sec:prelim}.

Our first result is that an envy-free allocation of the chores exists (and can be computed efficiently) with high probability as long as $m \geq 2n$.

\begin{theorem} \label{thm:upper-main}
For any PDF-bounded distribution $\cD$, if $m \geq 2n$, then with high probability, an envy-free allocation exists.
Moreover, there is a polynomial-time algorithm that computes such an allocation.
\end{theorem}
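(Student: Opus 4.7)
My plan is to design a matching-based polynomial-time algorithm and argue envy-freeness with high probability by combining concentration (from PDF-boundedness) with a Hall-type matching existence argument. Write $m = nk + r$ with $k = \lfloor m/n \rfloor \geq 2$ and $0 \leq r < n$, and partition the chores into $n$ bundles $B_1, \ldots, B_n$ of sizes $k$ or $k+1$ via a balanced construction (for instance, round-robin allocation based on any fixed ordering of the chores). Because each bundle contains at least two items, Bernstein's inequality together with PDF-boundedness yields, via a union bound, that with high probability $|d_i(B_j) - \mu |B_j|| = O(\sqrt{k \log n})$ for every agent $i$ and bundle $B_j$, where $\mu$ denotes the mean of $\cD$.

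Next, I would construct a bipartite graph $G$ on agents and bundles by connecting agent $i$ to bundle $B_j$ whenever $d_i(B_j)$ is within a carefully chosen tolerance of agent $i$'s minimum bundle disutility. Using the uniform concentration from the previous step, I would verify Hall's condition for $G$ by a union bound over subsets $S \subseteq [n]$: for each $S$ the neighborhood has size at least $|S|$ with high probability. A perfect matching $\pi$ of $G$ then exists and can be found in polynomial time by the Hungarian method. Allocating $A_i := B_{\pi(i)}$ yields an approximately envy-free allocation, with per-pair envy bounded by the chosen tolerance.

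To upgrade from approximate to exact envy-freeness, I would perform a small number of local chore swaps between bundles to close any residual envy gap. Since individual chore disutilities lie in $[0,1]$ and the envy budget from the matching stage is controlled, only polynomially many swaps are needed, and the concentration bound ensures that these swaps do not cascade into envy elsewhere. All steps---partitioning, bipartite matching, and local swaps---run in polynomial time.

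\textbf{Main obstacle.} The delicate point is calibrating the tolerance in the bipartite graph: it must be wide enough that Hall's condition holds for every subset of agents simultaneously with high probability, yet narrow enough that the residual envy can be eliminated by a bounded number of swaps without creating new envy. The boundary regime $m = 2n$ (so $k = 2$) is particularly tight because the concentration margin is only polylogarithmic, leaving little slack to coordinate the matching and the swap stages, and may require a more refined argument specific to small $k$.
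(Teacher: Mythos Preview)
Your approach has a genuine gap at the most important regime, $k = \lfloor m/n \rfloor = O(1)$. The concentration claim $|d_i(B_j) - \mu|B_j|| = O(\sqrt{k\log n})$ is vacuous there: when $k=2$ the bundle disutility lives in $[0,3]$ while your error bar is $\Theta(\sqrt{\log n})$, so the bound says nothing. More fundamentally, fixing the partition $B_1,\dots,B_n$ \emph{before} looking at any disutilities throws away the structure needed for envy-freeness. For a fixed partition the values $d_i(B_1),\dots,d_i(B_n)$ are i.i.d.\ sums of $k$ samples; the gap between agent $i$'s best and second-best bundle is of order $n^{-1/k}$, not $o(1)$ uniformly, so no choice of tolerance simultaneously makes the bipartite graph dense enough for Hall and the residual envy small enough to kill by ``a few swaps.'' And the swap stage is itself an open-ended problem: swapping one chore changes two bundles that all $n$ agents evaluate, so you need an argument that swaps terminate and do not cascade---you have not supplied one, and for constant $k$ there is no slack to absorb cascades.

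The paper takes a quite different route that avoids pre-partitioning. It separates $m \geq c\,n\log n$ (handled by the cost-minimizing allocation plus Chernoff) from $2n \leq m \leq n^{8/7}$. In the latter range, it first applies the \citet{ManurangsiSu20} algorithm to the first $\lfloor m/n\rfloor \cdot n$ chores via the complement $u_i(j)=1-d_i(j)$; that algorithm constructs bundles using the disutilities and guarantees $d_i(A^0_i) \leq O(r\log(rn)/n)$, i.e.\ every agent's own bundle is extremely cheap for her---something a disutility-oblivious partition cannot achieve. The remaining $<n$ chores are then assigned by a bipartite $2$-matching to agents who have a provable \emph{gap} $d_i(A^0_{i'}) - d_i(A^0_i) \geq 2\tau$ for all $i'\neq i$; the paper shows that at least $n/2$ agents have such a gap with high probability. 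This gap step is the chore-specific insight (it would fail for goods, since giving an agent an extra item can make \emph{others} envy her), and it is exactly the piece your proposal is missing.
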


\Cref{thm:upper-main} stands in contrast to the case of goods, where if $m$ is not almost divisible by $n$, then $m$ needs to be at least $\Omega(n\log n/\log\log n)$ for an envy-free allocation to exist with high probability \citep{ManurangsiSu20}.
On the other hand, if $m < n$, clearly no envy-free allocation of the chores exists.
We strengthen this observation by showing that when $m$ exceeds $n$ by a small (multiplicative) constant factor, an envy-free allocation is still unlikely to exist.
Note that this result does not require the PDF-boundedness assumption.

\begin{theorem} \label{thm:lb-main}
Let $\nu \geq 1.1256$ be the (unique) positive solution to the equation
\begin{align*} 
\nu\left(1 + \left(1 + \frac{1}{\nu}\right)e^{-\frac{1}{\nu}}\right) = 2.
\end{align*}
For any constant $\eps \in (0, \nu)$, if $m \leq (\nu - \eps)n$, then with high probability, no envy-free allocation exists.
\end{theorem}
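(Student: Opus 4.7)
The plan is to isolate a combinatorial obstruction that every envy-free allocation must satisfy, then show via a first-moment plus concentration argument that this obstruction fails with high probability whenever $m \leq (\nu - \eps) n$.

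\textbf{Structural step.} First observe that in any envy-free chore allocation no agent may have an empty bundle, since empty bundles have disutility $0$ whereas every other bundle has strictly positive disutility almost surely; hence, writing $a$ for the number of agents whose bundle is a singleton, we have $a + 2(n - a) \leq m$, giving $a \geq 2n - m$. Now for each agent $i$, let $f(i) := \argmin_{c} d_i(c)$ denote $i$'s favorite chore (well defined almost surely by non-atomicity of $\cD$), let $A$ be the set of singleton agents, and let $C_A$ be the set of chores they receive. I claim that every $c \in C_A$ satisfies $|f^{-1}(c)| \leq 1$: otherwise there is some $k$ with $f(k) = c$ that is not the owner $i \in A$ of $c$, and since $d_k(c) < d_k(c')$ strictly for all $c' \neq c$ and $c \notin S_k$, we get $d_k(S_k) = \sum_{c' \in S_k} d_k(c') > d_k(c) = d_k(S_i)$, so $k$ envies $i$, a contradiction. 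Defining $T := \{c : |f^{-1}(c)| \leq 1\}$, we therefore obtain $|T| \geq |C_A| = a \geq 2n - m$ whenever an envy-free allocation exists.

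\textbf{Probabilistic step.} Since $\cD$ is continuous and the agents' disutility rows are i.i.d., $\Pr[f(i) = c] = 1/m$ for every $(i, c)$ and the events $\{f(i) = c\}$ are independent across $i$; hence $|f^{-1}(c)| \sim \mathrm{Bin}(n, 1/m)$, and letting $\nu' := m/n$ we compute
\[
\E[|T|] = m\bigl[(1-1/m)^n + n(1/m)(1-1/m)^{n-1}\bigr] = (1 + o(1))(\nu' + 1) e^{-1/\nu'}\, n.
\]
Changing any one row of the disutility matrix can change $f(i)$ only for that row's agent, and hence $|T|$ by at most $2$; so McDiarmid's inequality applied to the $n$ independent rows gives $\bigl||T| - \E|T|\bigr| = O(\sqrt{n \log n})$ with high probability.

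\textbf{Closing the argument.} Finally, the function $g(\nu') := (2 - \nu') - (\nu' + 1) e^{-1/\nu'}$ has derivative $g'(\nu') = -1 - e^{-1/\nu'}\bigl(1 + (\nu' + 1)/{\nu'}^2\bigr) < 0$, so it is strictly decreasing with $g(\nu) = 0$ at the specified root. Thus for every $\nu' \leq \nu - \eps$ we have $g(\nu') \geq g(\nu - \eps) =: \delta > 0$, and combining with the probabilistic step yields $|T| \leq (2 - \nu')n - \delta n + O(\sqrt{n \log n}) < 2n - m$ with high probability once $n$ is large, contradicting the structural necessary condition. The only conceptually non-routine step is the structural claim identifying singleton chores with chores of small preimage count under $f$; once this is in hand, the remainder is a standard first-moment plus bounded-differences computation combined with a short monotonicity check on $g$.
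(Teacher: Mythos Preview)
Your proof is correct and follows essentially the same approach as the paper's: the structural obstruction (singleton bundles in an envy-free allocation cannot contain repeated favorite chores, so there must be at least $2n-m$ non-repeated ones), the first-moment computation, and the bounded-differences concentration are all the same, just phrased in complementary terms (you count chores with $|f^{-1}(c)|\le 1$ whereas the paper counts chores with $|f^{-1}(c)|>1$). The only cosmetic difference is that you invoke McDiarmid directly for the tail bound, whereas the paper bounds the variance via the Efron--Stein inequality and then applies Chebyshev; both suffice, and your bound is in fact sharper.
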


On the proportionality front, we establish that a surprisingly small number of chores---any superconstant number---is already sufficient to ensure existence.

\begin{theorem} \label{thm:prop-ub-main}
For any PDF-bounded distribution $\cD$, if $m = \omega(1)$, then with high probability, a proportional allocation exists.
Moreover, there is a polynomial-time algorithm that computes such an allocation.
\end{theorem}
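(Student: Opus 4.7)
The plan is to split on the size of $m$. In the case $m \ge 2n$, \Cref{thm:upper-main} already yields an envy-free allocation with high probability in polynomial time, and since envy-freeness implies proportionality in the chores setting (summing the envy-freeness inequalities over the $n$ agents gives exactly the proportionality bound), this case is immediate.

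In the remaining case $m < 2n$, I construct the allocation via a carefully designed bipartite matching. Set $b := \lceil m/n \rceil \in \{1,2\}$, $\mu := \E_{X \sim \cD}[X]$, and $\tau := (m-1)\mu/(4bn)$. For each agent $i$ and chore $j$, let $S_i^{(j)} := \sum_{j' \ne j} d_{ij'}$, noting that it is independent of $d_{ij}$. Form the bipartite graph $G$ with agents on one side, chores on the other, and an edge $(i,j)$ iff both $d_{ij} \le \tau$ and $S_i^{(j)} \ge (m-1)\mu/4$; then run a polynomial-time $b$-matching routine (for instance, network flow) to find a matching saturating every chore. If such a matching exists, proportionality is immediate: for any matched pair $(i,j)$ we have $S_i \ge S_i^{(j)} \ge (m-1)\mu/4 = bn\tau$, so $S_i/n \ge b\tau$, while each agent receives at most $b$ chores of disutility at most $\tau$ each, giving total disutility at most $b\tau \le S_i/n$.

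It therefore remains to prove the $b$-matching exists with high probability. By the independence of $d_{ij}$ and $S_i^{(j)}$, the probability of an edge factors as $p := F(\tau) \cdot \Pr[S_i^{(j)} \ge (m-1)\mu/4] \ge (\alpha\tau)(1/2)$, using PDF-boundedness for the first factor and Chebyshev's inequality applied to $S_i^{(j)}$ for the second (valid once $m$ exceeds a $\cD$-dependent constant, which $m = \omega(1)$ eventually provides). Hence $np = \Omega(m)$, which is $\omega(\log m)$ since $m \to \infty$. Edges are mutually independent across distinct agents; within an agent they are correlated via the overlapping $S_i^{(j)}$'s, but the pairwise overlap still satisfies $\Pr[A_j \cap A_{j'}] \le F(\tau)^2 = O(p^2)$. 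To verify Hall's condition $|N(T)| \ge |T|/b$ for every $T \subseteq$ chores, I would lower-bound $\Pr[i \in N(T)]$ by inclusion-exclusion to obtain $\Omega(\min(1, |T|p))$, then use the cross-agent independence for a Chernoff concentration of $|N(T)|$, and close with a union bound over subsets, absorbing the $\binom{m}{k}$ factor into the $np \gg \log m$ margin.

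The principal obstacle is precisely this Hall's-condition verification in the presence of within-agent edge dependencies. The decomposition $S_i = d_{ij} + S_i^{(j)}$ is chosen so that $d_{ij}$ remains independent of $S_i^{(j)}$, letting the individual and pairwise edge probabilities be controlled by elementary second-moment calculations rather than by a full joint distributional analysis.
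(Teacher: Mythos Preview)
Your $m \ge 2n$ case matches the paper. For $m < 2n$ the approaches diverge: you run a single $b$-matching with $b=\lceil m/n\rceil$ and plan to verify Hall's condition directly, whereas the paper (i) for ``small'' $m$ (roughly $m\log m \le n/40$) keeps only the edge from each agent to her \emph{favorite} chore, so every left vertex has degree at most one and a right-saturated matching exists iff no chore is isolated, and (ii) for ``medium'' $m$ partitions $M$ into pieces of small-$m$ size and applies (i) to each piece. The favorite-chore restriction is precisely what lets the paper avoid a Hall union bound over subsets.

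Your Hall verification has a real gap in the regime $m = \Theta(n)$. The inclusion--exclusion (second-order Bonferroni) bound you state, $\Pr[i\in N(T)]=\Omega(\min(1,|T|p))$, is only a \emph{constant} when $|T|p\ge 1$, and a constant is not enough here: when $m$ is within a constant factor of $bn$ (e.g.\ $m=n$, $b=1$, $T=M$), Hall's condition $|N(T)|\ge |T|/b$ requires $|N(T)|$ to be essentially all of $N$, so you need $\Pr[i\in N(T)]=1-o(1)$. Bonferroni cannot deliver that. A fix is to bound the complement: writing $A_j=\{d_{ij}\le\tau\}$ and $B_j=\{S_i^{(j)}\ge (m-1)\mu/4\}$, one checks
\[
\Pr[i\notin N(T)]
\;\le\; \Pr\Bigl[\bigcap_{j\in T}\neg A_j\Bigr]+\Pr\Bigl[\bigcup_{j\in T}\neg B_j\Bigr]
\;\le\; (1-F(\tau))^{|T|}+|T|\,e^{-\Omega(m)},
\]
using independence of the $A_j$'s for the first term and a Chernoff bound on each $S_i^{(j)}$ for the second. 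This is sharp enough to push the union bound over all $T$ through for every $m<2n$, but it is a different argument from what you sketched. (For $m=o(n)$ your sketch does work as written, since then $|T|\le m=o(n)$ and a constant lower bound on $\Pr[i\in N(T)]$ already gives $\E[|N(T)|]=\Omega(n)\gg |T|$; the trouble is exactly the medium range where the paper, too, needs a separate idea.)
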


We also provide a matching bound, which shows that if $m$ is constant, then there is a constant probability that no proportional allocation exists.

\begin{theorem} \label{thm:prop-lb}
For any $(\alpha, \beta)$-PDF-bounded distribution $\cD$ and any $m, n$ such that $\frac{m}{n} \leq \frac{1}{2\beta}$, the probability that no proportional allocation exists is at least $e^{-2\beta m}$. 
\end{theorem}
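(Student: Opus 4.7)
The plan is to exhibit a single explicit ``bad'' event that forces the nonexistence of a proportional allocation and has probability at least $e^{-2\beta m}$. Writing $V_i := \sum_{j=1}^m d_{ij}$ for agent $i$'s total disutility, proportionality requires any agent receiving chore $j$ to have $d_{ij} \leq V_i/n$ (in particular, for each chore in her bundle). The guiding intuition is that if some chore's disutility exceeds every agent's threshold $V_i/n$, then that chore cannot be placed anywhere, so no proportional allocation can exist.

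Concretely, I would set the threshold $t := (1 - e^{-2\beta m/n})/\beta$ and consider the event $E$ that $d_{i1} > t$ for every agent $i \in [n]$ (fixing chore $1$ without loss of generality). Two preliminary facts are needed. First, $t \in (0, 1]$: since the density of $\cD$ integrates to $1$ over $[0, 1]$ we have $\beta \geq 1$, so $t \leq 1/\beta \leq 1$. Second, $t > m/n$: writing $x := 2\beta m / n \in (0, 1]$ (by hypothesis), this reduces to showing $1 - e^{-x} > x/2$ on $(0, 1]$, which follows from an elementary analysis of $f(x) := 1 - e^{-x} - x/2$ (one checks $f(0) = 0$, that $f'(x) = e^{-x} - 1/2$ vanishes only at $x = \ln 2$, and that $f(1) = 1/2 - 1/e > 0$).

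The deterministic implication is then immediate: each $d_{ij} \leq 1$ gives $V_i \leq m$, and hence $V_i/n \leq m/n < t$. Under $E$, whichever agent receives chore $1$ incurs disutility at least $d_{i1} > t > V_i/n$, violating proportionality. So chore $1$ cannot be placed, and no proportional allocation exists whenever $E$ holds.

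For the probability bound, I would use independence across agents together with the PDF upper bound $F(t) \leq \beta t$: $\Pr[d_{i1} > t] = 1 - F(t) \geq 1 - \beta t = e^{-2\beta m/n}$ by the very choice of $t$, whence $\Pr[E] \geq (e^{-2\beta m/n})^n = e^{-2\beta m}$. The genuine obstacle is the choice of $t$: it must be simultaneously (i) strictly larger than $m/n$ so that $E$ forces a proportionality violation, and (ii) small enough that the per-agent survival probability $1 - \beta t$ is at least $e^{-2\beta m/n}$. Making (ii) an exact equality and then verifying (i) via the elementary inequality $1 - e^{-x} > x/2$ on $(0, 1]$ is what cleanly balances these competing constraints.
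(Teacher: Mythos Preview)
Your proof is correct and follows essentially the same approach as the paper: both fix chore~$1$ and show that with probability at least $e^{-2\beta m}$ every agent's disutility for it exceeds her proportionality threshold (using $V_i \le m$), so chore~$1$ cannot be placed. The only cosmetic difference is that the paper uses the threshold $m/n$ directly and then proves $1 - \beta m/n \ge e^{-2\beta m/n}$ via the intermediate bound $1/(1 + 2\beta m/n)$, whereas you solve for the threshold $t$ that makes the per-agent survival probability exactly $e^{-2\beta m/n}$ and then verify $t > m/n$; both routes rest on the same elementary inequality $1 - e^{-x} > x/2$ on $(0,1]$.
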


\Cref{thm:prop-ub-main,thm:prop-lb} essentially complete our understanding of the asymptotic existence of proportional allocations for chores. 
Again, this demonstrates a stark difference from the case of goods: for goods, no proportional allocation exists if $m < n$, whereas for chores, we have existence even when $m$ is much smaller than $n$ (e.g., $m = \Theta(\log\log n)$).

Overall, our findings indicate that achieving fairness is, interestingly, \emph{easier} for chores than for goods when viewed from an asymptotic perspective.

\subsection{Overview of Proofs}
\label{sec:overview}

We now provide a high-level overview of our proofs, starting with the existence of envy-free allocations \textbf{(\Cref{thm:upper-main})}. 
The proof of this theorem involves handling three cases.
\begin{itemize}
\item \underline{Case I}: 
$m = \Omega(n \log n)$. 
In this case, we can use a similar approach to \citet{DickersonGoKa14}, namely, we assign each chore to an agent with the lowest disutility for it, thereby minimizing social cost. 
A standard application of concentration inequalities shows that this yields an envy-free allocation with high probability.
\item \underline{Case II}: 
$m = O(n \log n)$ and $m$ is divisible by $n$. 
In this case, the algorithm of \citet{ManurangsiSu20} for goods directly yields an envy-free allocation for chores. 
The reason is that their algorithm always outputs a \emph{balanced} allocation, that is, an allocation in which every agent receives the same number of items. 
We may therefore view each chore as a ``good'' whose utility is a complement of the disutility of the chore; this transformation preserves the envy-freeness of allocations.
\item \underline{Case III}: 
$m = O(n \log n)$ and $m$ is not divisible by $n$. 
This is the key novelty we contribute to the proof. 
Our algorithm starts by running the algorithm from Case~II on the first $\lfloor m/n \rfloor \cdot n$ chores. 
We then show that most of the agents have a sufficiently large ``gap'' between their own disutility and their disutility for any other agent's bundle.
For these agents, such a gap allows us to assign the remaining chores to them as long as the disutilities are sufficiently low; we use a matching-based algorithm for this second step. 
It is worth noting that this case precisely captures the difference between goods and chores: 
for goods, such a gap would \emph{not} allow us to assign an additional good to the agent, as some other agent might envy this agent as a result.
\end{itemize}

As for the non-existence of envy-free allocations \textbf{(\Cref{thm:lb-main})}, we take inspiration from the corresponding result of \citet{DickersonGoKa14}.
In particular, we consider chores that are ``favorite'' (i.e., yield the lowest disutility) for multiple agents. 
We observe that each of these chores cannot be in a singleton bundle in an envy-free allocation; otherwise, one such agent would envy the agent who receives the chore. 
By establishing a lower bound on the number of such chores, we arrive at our result.

We remark that there are two important differences between our non-existence result and that of \citet{DickersonGoKa14}. 
Firstly, in the case of goods, if a good is the favorite good for $k$ agents, then at least $k - 1$ of the agents must receive at least two goods in order to be envy-free. 
This is not true for chores, so we use the aforementioned observation instead.
Secondly, to derive a high probability bound, Dickerson \textit{et al.}~relied on an application of Markov's inequality---this leads to a lower bound of only $m \geq n + o(n)$. 
In this paper, we utilize a more refined approach based on the so-called Efron--Stein inequality, which allows us to obtain a high probability bound even for $m \geq n + \Theta(n)$. 
We note that our approach can also be applied to the case of goods; however, we do not provide the details as it does not give us the tight bound (and the tight bound in that case has already been established by \citet{ManurangsiSu20}).

Next, we consider proportionality, again starting with existence \textbf{(\Cref{thm:prop-ub-main})}.
Observe that if $m \geq 2n$, the existence of envy-free allocations (\Cref{thm:upper-main}) immediately implies that of proportional allocations. 
Thus, we may focus on the setting where $m < 2n$. 
It is further useful to distinguish between two cases: the ``small-$m$'' case where $m = O(n/\log n)$ and the ``medium-$m$'' case where $m = \Omega(n / \log n)$. 
In the small-$m$ case, we restrict our attention to allocations in which each agent receives either one or no chore. 
With this restriction, we can create a bipartite graph that indicates whether a chore can be assigned to an agent---that is, a graph with the agents on one side and the chores on the other side, such that there is an edge between an agent and a chore exactly when the chore can be assigned to the agent without violating proportionality---and then find a right-saturated matching in this graph. 
However, this algorithm is somewhat challenging to analyze, as the edges are dependent and we cannot simply apply results from random graph theory (e.g.,~\citep{ErdosRe64}). 

To mitigate this technical challenge, we modify the algorithm slightly by considering the subgraph of the bipartite graph where each agent is (potentially) connected to only her favorite chore. 
The advantage here is that, since every vertex on the left has degree at most one, a right-saturated matching exists as long as no vertex on the right is isolated. 
The latter event is simple and the probability that it occurs can be conveniently bounded.

For the medium-$m$ case, the strategy above does not quite work, as there is a constant probability that some vertex on the right will be isolated. 
To circumvent this issue, we partition the chores into subsets so that we reduce the problem to the small-$m$ case. 
We then run the previously described algorithm on each subset and let the final allocation be the union of the allocations output on these subsets.

Finally, we address the non-existence of proportional allocations \textbf{(\Cref{thm:prop-lb})}.
Our proof is based on the following observation: 
if a chore has disutility larger than $m/n$ for an agent $i$, then it cannot be assigned to $i$ in a proportional allocation. 
We then show that for any chore, such an event occurs for \emph{all} agents $i$ with a constant probability---when this happens, no proportional allocation exists.

\section{Preliminaries}
\label{sec:prelim}

Let $N = \{1, \dots, n\}$ be the set of agents and $M = \{1, \dots, m\}$ be the set of chores, where $n,m\ge 2$. 
For each $(i,j)\in N\times M$, we use $d_i(j)\ge 0$ to denote the \emph{disutility} (or \emph{cost}) of chore~$j$ for agent~$i$. 
We assume throughout the paper that the disutilities are additive, that is, $d_i(M') = \sum_{j \in M'} d_i(j)$ for all $M' \subseteq M$.

An allocation $A = (A_1, \dots, A_n)$ is a partition of $M$ into $n$ disjoint subsets, where the bundle $A_i$ is assigned to agent~$i$. 
We say that an allocation $A$ is \emph{envy-free} if for every pair of agents $i, i'$, it holds that $d_i(A_i) \leq d_i(A_{i'})$. 
When this condition does not hold for some pair $i,i'$ (i.e., $d_i(A_i) > d_i(A_{i'})$), we say that $i$ envies $i'$ (with respect to the allocation $A$).
We say that an allocation $A = (A_1, \dots, A_n)$ is \emph{proportional} if $d_i(A_i) \leq d_i(M)/n$ for all $i \in N$.

We assume that each disutility $d_i(j)$ is drawn independently at random from a distribution $\cD$. 
The support of $\cD$ lies in $[0, 1]$ and $\cD$ is \emph{non-atomic}, i.e., $\Pr_{X \sim \cD}[X = x] = 0$ for every $x \in [0, 1]$. 
For the sake of brevity, we will not repeat these assumptions in our formal statements.
Due to non-atomicity, with probability~$1$, all disutilities are positive (i.e., $d_i(j) > 0$ for all $i\in N$ and $j\in M$) and distinct (i.e., $d_i(j) \ne d_{i'}(j')$ for all $i,i'\in N$ and $j,j'\in M$ such that $(i,j) \ne (i',j')$); in particular, each agent has a unique ``favorite chore'' (i.e., the set $\argmin_{j \in M} d_i(j)$ has size~$1$).
We will assume that this is the case throughout the paper, and slightly abuse notation by using $\argmin_{j \in M} d_i(j)$ to denote an item rather than a set.
In addition, for most of our results, we assume that $\cD$ is \emph{PDF-bounded}. 
This means that there exist constants $\alpha, \beta > 0$ such that $f_{\cD}(x) \in [\alpha, \beta]$ for all $x \in [0, 1]$, where $f_{\cD}$ denotes the probability distribution function of $\cD$.
We sometimes write \emph{$(\alpha,\beta)$-PDF-bounded} if we want to specify the constants $\alpha,\beta$ explicitly.
Notice that $\alpha \le 1$ and $\beta \ge 1$ hold for every distribution $\cD$; the uniform distribution on $[0,1]$ has $\alpha = \beta = 1$ (and is the only distribution with this property).

We use $\log$ to denote the natural logarithm, and say that an event occurs ``with high probability'' if the probability that it occurs approaches $1$ as $n\rightarrow\infty$.

\subsection{Concentration Inequalities}

We now list two concentration inequalities that we will make use of.
The first is the well-known Chernoff bound.

\begin{lemma}[Chernoff bound] \label{lem:chernoff}
Let $X_1, \dots, X_k$ be independent random variables that are bounded in $[0, 1]$, and let $X := X_1 + \cdots + X_k$. 
Then, for any $\delta \ge 0$,
$$\Pr[X \geq (1 + \delta)\E[X]] \leq \exp\left(\frac{-\delta^2 \E[X]}{2 + \delta}\right)$$
and
$$\Pr[X \leq (1 - \delta)\E[X]] \leq \exp\left(\frac{-\delta^2 \E[X]}{2}\right).$$
\end{lemma}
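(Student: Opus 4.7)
The plan is to use the classical Cramér--Chernoff exponential-moment method, handling the two tails symmetrically. For the upper tail, I would fix a parameter $t > 0$ to be optimized later and apply Markov's inequality to $e^{tX}$, giving
$$\Pr[X \geq (1+\delta)\mu] = \Pr[e^{tX} \geq e^{t(1+\delta)\mu}] \leq e^{-t(1+\delta)\mu}\,\E[e^{tX}],$$
where $\mu := \E[X]$. By independence, $\E[e^{tX}] = \prod_k \E[e^{tX_k}]$. Since each $X_k \in [0,1]$ and $x \mapsto e^{tx}$ is convex, it is dominated on $[0,1]$ by its secant $1 + x(e^t - 1)$, yielding $\E[e^{tX_k}] \leq 1 + \E[X_k](e^t - 1) \leq \exp\bigl(\E[X_k](e^t-1)\bigr)$. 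Multiplying across $k$ gives $\E[e^{tX}] \leq \exp(\mu(e^t - 1))$, and substituting back produces
$$\Pr[X \geq (1+\delta)\mu] \leq \exp\bigl(\mu(e^t - 1 - t(1+\delta))\bigr).$$
Optimizing over $t$ with $t = \log(1+\delta)$ yields the familiar tight bound $\bigl(e^\delta/(1+\delta)^{1+\delta}\bigr)^{\mu}$.

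The remaining task is to convert this tight bound into the clean exponential form stated in the lemma. Taking logarithms, this amounts to the elementary inequality $(1+\delta)\log(1+\delta) - \delta \geq \delta^2/(2+\delta)$ for all $\delta \geq 0$, which is verified by a short calculus argument: both sides vanish at $\delta = 0$ and comparing derivatives reduces the task to $\log(1+\delta) \geq 2\delta/(2+\delta)$, provable in turn by one more differentiation. For the lower tail, the same template applied to $e^{-tX}$ for $t > 0$ produces the bound $\bigl(e^{-\delta}/(1-\delta)^{1-\delta}\bigr)^{\mu}$, and the companion inequality $(1-\delta)\log(1-\delta) + \delta \geq \delta^2/2$ on $[0,1]$ converts this into the stated form $\exp(-\delta^2\mu/2)$.

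I expect no real obstacle here: this is a textbook computation, and the only mildly delicate parts are the two closing scalar inequalities that convert the tight Cramér--Chernoff bound to the displayed sub-Gaussian-style form. In the body of a paper one would typically just cite a standard reference (e.g., Mitzenmacher and Upfal) rather than reproduce the argument.
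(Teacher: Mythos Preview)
Your proposal is correct and follows the standard textbook derivation of the Chernoff bound. The paper does not actually prove this lemma at all: it is stated as the ``well-known Chernoff bound'' in the preliminaries and used as a black box, exactly as you anticipated in your final remark about citing a standard reference.
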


The second inequality is the so-called Efron--Stein inequality (e.g., \cite[Corollary 3.2]{BoucheronLuMa13}).
Given a set $\cZ$ and a real number $c\ge 0$, we say that a function $F: \cZ^n \to \R$ is \emph{$c$-difference bounded} if for all $i\in\{1,\dots,n\}$ and $z_1, \dots, z_n, z'_i \in \cZ$, it holds that
\begin{align*}
|F(z_1, &\dots, z_{i-1}, z_i, z_{i+1}, \dots, z_n) \\
&- F(z_1, \dots, z_{i-1}, z'_i, z_{i+1}, \dots, z_n)| \leq c.
\end{align*}

\begin{lemma}[Efron--Stein inequality] \label{lem:es-ineq}
Let $c\ge 0$ be a real number, and let $Z_1, \dots, Z_n$ be independent random variables from sample space $\cZ$.
If $F: \cZ^n \to \R$ is a $c$-difference bounded function, then
\begin{align*}
\var(F(Z_1, \dots, Z_n)) \leq \frac{nc^2}{4}. 
\end{align*}
\end{lemma}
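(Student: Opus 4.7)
The plan is to use the Doob martingale (telescoping conditional expectation) decomposition of $F(Z_1,\ldots,Z_n)$, combined with Popoviciu's variance inequality for bounded random variables. Define $F_i := \E[F(Z_1,\ldots,Z_n) \mid Z_1, \ldots, Z_i]$ for $i = 0, 1, \ldots, n$, so that $F_0 = \E[F]$ and $F_n = F(Z_1,\ldots,Z_n)$, and set $\Delta_i := F_i - F_{i-1}$. Since $\{\Delta_i\}$ are martingale differences (each has mean zero conditional on $Z_1,\ldots,Z_{i-1}$), they are pairwise uncorrelated, giving
$$\var(F(Z_1,\ldots,Z_n)) = \E\Big[\Big(\sum_{i=1}^n \Delta_i\Big)^2\Big] = \sum_{i=1}^n \E[\Delta_i^2].$$

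The key step is to bound each term by $c^2/4$. To do so, I would show that, conditional on any realization $(z_1,\ldots,z_{i-1})$ of $Z_1,\ldots,Z_{i-1}$, the random variable $F_i$ takes values in an interval of length at most $c$. Write $F_i = h(Z_i)$, where
$$h(z) := \E[F(z_1, \ldots, z_{i-1}, z, Z_{i+1}, \ldots, Z_n)].$$
By the $c$-difference bounded property of $F$ applied pointwise inside the expectation, combined with linearity of expectation, we get $|h(z) - h(z')| \le c$ for all $z, z' \in \cZ$. Hence the conditional range of $F_i$ given $Z_1,\ldots,Z_{i-1}$ is at most $c$.

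Now I would invoke Popoviciu's inequality (any random variable $X$ with range contained in an interval of length $\ell$ satisfies $\var(X) \le \ell^2/4$, attained by the two-point distribution at the endpoints) to obtain $\var(F_i \mid Z_1,\ldots,Z_{i-1}) \le c^2/4$. Since $F_{i-1}$ is a deterministic function of $Z_1,\ldots,Z_{i-1}$, this conditional variance equals $\E[\Delta_i^2 \mid Z_1,\ldots,Z_{i-1}]$; taking expectations and summing over $i$ then yields $\var(F(Z_1,\ldots,Z_n)) \le nc^2/4$.

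The main obstacle—and the step that uses the hypotheses most essentially—is establishing the conditional range bound on $F_i$ in the second paragraph. In particular, the transfer of the $c$-bounded-differences property from $F$ to $h$ hinges on the independence of the $Z_j$'s: without independence, fixing $Z_1,\ldots,Z_{i-1}$ would not let us treat $(Z_{i+1},\ldots,Z_n)$ as a distribution independent of the value chosen for the $i$-th coordinate, and the bound $|h(z)-h(z')|\le c$ could fail. The martingale orthogonality and Popoviciu's inequality are standard and introduce no real difficulty.
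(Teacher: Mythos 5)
Your proof is correct. Note that the paper does not actually prove this lemma---it is quoted from \citet{BoucheronLuMa13} (Corollary~3.2)---so there is no in-paper argument to compare against; but your Doob-martingale decomposition $F_i = \E[F \mid Z_1,\dots,Z_i]$, combined with the observation that conditionally on $Z_1,\dots,Z_{i-1}$ the variable $F_i$ has range at most $c$ (which is exactly where independence is needed, as you correctly flag) and Popoviciu's bound $\var \le (\text{range})^2/4$, is precisely the standard derivation of this bounded-differences corollary, so your route coincides with the textbook one. All the steps check out: the martingale differences are orthogonal, $F_{i-1} = \E[F_i \mid Z_1,\dots,Z_{i-1}]$ by the tower property so $\E[\Delta_i^2 \mid Z_1,\dots,Z_{i-1}]$ is indeed the conditional variance of $F_i$, and the transfer of the difference bound from $F$ to $h$ via Jensen/linearity is valid. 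The only point you leave implicit is integrability, but this is automatic: telescoping the difference-bounded property over all $n$ coordinates shows $F$ itself is bounded (its range has length at most $nc$), so every conditional expectation you write down exists.
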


\subsection{Matchings in Random Graphs}

Next, we state some definitions and results on matchings in random graphs.

\begin{definition}
Let $G$ be a bipartite graph and $r$ be a positive integer.
An \emph{$r$-matching} of $G$ is a subgraph obtained by removing some edges from $G$ so that every left vertex has degree at most $r$ and every right vertex has degree at most $1$. 
When $r = 1$, we simply call an $r$-matching a \emph{matching}.

An ($r$-)matching is said to be \emph{right-saturated} if every right vertex has degree exactly $1$. 
It is said to be \emph{perfect} if every left vertex has degree exactly $r$ and every right vertex has degree exactly $1$. 
\end{definition}

Recall that an \emph{Erd{\H{o}}s-R{\'{e}}nyi random bipartite graph} is a random graph in which each edge between a pair of left and right vertices is included with probability $p$, independently of other edges.
We write $\cG(n_L, n_R, p)$ to denote the distribution of such a random graph with $n_L$ left vertices and $n_R$ right vertices.
A classic result, due to Erd{\H{o}}s and R{\'{e}}nyi themselves, is that a perfect matching in $\cG(n, n, p)$ exists with high probability as long as $p$ is noticeably above $\log n / n$.

\begin{lemma}[\citet{ErdosRe64}] \label{lem:er-matching}
Suppose that $G \sim \cG(n, n, p)$ where $p = (\log n + \omega(1))/n$. 
Then, with high probability, $G$ contains a perfect matching. 
\end{lemma}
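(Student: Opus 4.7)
The plan is to invoke Hall's marriage theorem and bound the failure probability by a union bound over potential Hall violators. If $G$ has no perfect matching, then by Hall's theorem there exists a \emph{minimal} Hall violator $S$ on one side (say the left); by the bipartite symmetry of Hall's condition, I may assume $|S| = k$ for some $1 \le k \le \lfloor n/2 \rfloor$. Minimality forces $|N(S)| = k - 1$, and moreover every $u \in N(S)$ must have at least two neighbors in $S$ (otherwise one could drop $u$'s unique neighbor from $S$ and obtain a smaller violator).

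For a fixed pair $(S,T)$ with $|S| = k$ and $|T| = k-1$, the probability that $S$ is a minimal violator with $N(S) = T$ is at most
\[
(1-p)^{k(n-k+1)} \cdot \left(\binom{k}{2} p^2\right)^{k-1},
\]
where the two factors concern disjoint edge sets: the first is the probability that no edge of $G$ goes from $S$ into $B \setminus T$, and the second bounds the probability that each of the $k-1$ vertices of $T$ has at least two neighbors in $S$. A union bound over $(S, T)$ and the symmetric argument on the right side yield
\[
\Pr[\text{no PM}] \;\le\; 2 \sum_{k=1}^{\lfloor n/2 \rfloor} \binom{n}{k}\binom{n}{k-1} (1-p)^{k(n-k+1)} \left(\binom{k}{2} p^2\right)^{k-1}.
\]

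I would first peel off the $k = 1$ term (the isolated-vertex contribution), which equals $n(1-p)^n \le e^{\log n - pn} = e^{-\omega(1)} = o(1)$ using the hypothesis $pn = \log n + \omega(1)$. For $k \ge 2$, I would split the sum into a small range ($2 \le k \le \sqrt{n}$), where the minimality factor is needed to offset the polynomial blow-up of the binomial coefficients relative to $(1-p)^{k(n-k+1)} \le e^{-pk(n-k+1)}$, and a large range ($\sqrt{n} < k \le n/2$), where $k(n-k+1) = \Omega(n^{3/2})$ and the exponential decay alone overwhelms everything.

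The main obstacle I expect is the small-$k$ regime: without the structural refinement from minimality, even the $k = 2$ term gives $\binom{n}{2}\binom{n}{1}(1-p)^{2(n-1)} \approx (n/2)\, e^{-2\omega(1)}$, which need not vanish when $\omega(1)$ grows only slowly (say as $\log \log n$). The extra factor $\left(\binom{k}{2} p^2\right)^{k-1}$ coming from minimality contributes a polynomial saving in $1/n$ that precisely kills the troublesome leftover from the binomial coefficients, making each small-$k$ term $o(1)$; routine summation then concludes the proof.
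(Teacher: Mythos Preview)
The paper does not prove this lemma at all: it is stated with the citation to Erd\H{o}s and R\'enyi and used as a black box (to derive \Cref{lem:matching}). So there is no ``paper's own proof'' to compare against. Your outline is the standard textbook argument via minimal Hall violators, and the overall strategy is sound.

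One caution about your large-$k$ range. The claim that for $\sqrt{n} < k \le n/2$ ``the exponential decay alone overwhelms everything'' because $k(n-k+1) = \Omega(n^{3/2})$ is more delicate than your phrasing suggests. At the boundary $k \approx \sqrt{n}$, Stirling gives $\log\bigl[\binom{n}{k}\binom{n}{k-1}\bigr] \approx \sqrt{n}\log n + 2\sqrt{n}$, while $p\,k(n-k+1) \approx \sqrt{n}\log n + \sqrt{n}\cdot\omega(1)$; after cancellation the exponent is $\sqrt{n}\bigl(2 - \omega(1)\bigr)$, which is negative \emph{only} because of the $\omega(1)$ hypothesis, not because $k(n-k+1)$ is large in absolute terms. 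If you had merely used $p \ge (\log n)/n$ in this range, the $k = \sqrt{n}$ term would blow up like $e^{2\sqrt{n}}$. The fix is easy---either keep the minimality factor for all $k$, or do a per-$k$ comparison of $2\log(en/k)$ against $p(n-k+1)$ (the function $k \mapsto 2\log(en/k) - p(n-k+1)$ is convex, so its maximum on $[\sqrt{n}, n/2]$ is attained at an endpoint and equals $2 - \omega(1) + o(1)$)---but you should make this step explicit rather than appeal to the crude $\Omega(n^{3/2})$ bound.
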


In this work, we will use the high-probability existence of a right-saturated $2$-matching in the setting where the right side has at most twice as many vertices as the left side. 

\begin{lemma} \label{lem:matching}
Suppose that $G \sim \cG(n_L, n_R, p)$ where $n/2 \leq n_L \leq n$, $n_R \leq n$, and $p = 2(\log n + \omega(1)) / n$. 
Then, with high probability, $G$ contains a right-saturated $2$-matching. 
\end{lemma}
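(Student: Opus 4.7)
The plan is to reduce the existence of a right-saturated $2$-matching in $G$ to the existence of a perfect matching in a square Erd\H{o}s--R\'{e}nyi bipartite graph, so that \Cref{lem:er-matching} can be applied directly. The crux is a standard \emph{left-duplication} trick: if we split each left vertex of $G$ into two copies, then a right-saturated matching in the duplicated graph projects to a right-saturated $2$-matching in $G$. To make the duplicated graph genuinely Erd\H{o}s--R\'{e}nyi (i.e., with independent edges), the starting point is to realize $G$ as the union of two independent sparser samples.

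Concretely, I would set $p_1 := 1 - \sqrt{1 - p}$, so that $1 - (1 - p_1)^2 = p$, and note that $p_1 \geq p/2 \geq (\log n + \omega(1))/n$. Draw $G_1, G_2 \sim \cG(n_L, n_R, p_1)$ independently; then $G_1 \cup G_2 \sim \cG(n_L, n_R, p)$, so we may couple $G = G_1 \cup G_2$. Build an auxiliary bipartite graph $H$ with $2 n_L$ left vertices $\{v^{(1)}, v^{(2)} : v \in L\}$ and the same $n_R$ right vertices, placing the edge $(v^{(i)}, u)$ in $H$ exactly when $(v, u) \in G_i$. Independence within and between $G_1$ and $G_2$ then gives $H \sim \cG(2 n_L, n_R, p_1)$.

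Since $n_R \leq n \leq 2 n_L$, I would pad $H$ with $2 n_L - n_R$ phantom right vertices, each joined to every left vertex independently with probability $p_1$, obtaining $H^* \sim \cG(2 n_L, 2 n_L, p_1)$. Setting $N := 2 n_L$ and using $n/2 \leq n_L \leq n$, the bound $p_1 \geq (\log n + \omega(1))/n$ translates into $p_1 \geq (\log N + \omega(1))/N$ (the additive $\log 2$ slack coming from $\log N \leq \log n + \log 2$ is absorbed by the $\omega(1)$ term). \Cref{lem:er-matching} then guarantees that $H^*$ admits a perfect matching $M^*$ with high probability. Restricting $M^*$ to the non-phantom right vertices yields a matching of $H$ saturating all $n_R$ original right vertices, and projecting each such edge $(v^{(i)}, u)$ back to $(v, u) \in G_i \subseteq G$ produces a subgraph of $G$ in which every right vertex has degree exactly $1$ and every left vertex has degree at most $2$ (since a vertex $v$ can contribute only through its two copies $v^{(1)}, v^{(2)}$). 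This is precisely a right-saturated $2$-matching of $G$.

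The main obstacle is ensuring that the duplication really yields a graph with independent edges; this forces us to spend a factor of roughly $2$ on $p$ up front via the split $p_1 \approx p/2$, which is exactly why the hypothesis is $p = 2(\log n + \omega(1))/n$ rather than $p = (\log n + \omega(1))/n$. Once the split is in place, the rest of the argument is bookkeeping with the inequalities $n/2 \leq n_L \leq n$ and $n_R \leq n$ needed to invoke \Cref{lem:er-matching} on the padded square graph.
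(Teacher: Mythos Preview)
Your proof is correct, but it follows a different route from the paper's. The paper does a case split on whether $n_R \leq n_L$: if so, it pads the right side with $n_L - n_R$ dummy vertices (each adjacent to the left with probability $p$) and invokes \Cref{lem:er-matching} once on the resulting $\cG(n_L,n_L,p)$; if $n_R > n_L$, it partitions $R$ into a piece of size $n_L$ and a remainder of size $n_R - n_L \leq n_L$, applies \Cref{lem:er-matching} to each induced subgraph separately, and takes the union of the two matchings. Your left-duplication argument instead handles both cases uniformly by first realizing $G$ as $G_1 \cup G_2$ with $G_1,G_2 \sim \cG(n_L,n_R,p_1)$ independent, so that the duplicated graph $H$ has genuinely independent edges, and then invoking \Cref{lem:er-matching} once on the padded $2n_L \times 2n_L$ graph. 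Your approach is a bit slicker in avoiding the case analysis, at the price of the coupling step; the paper's approach is more elementary but needs the two cases. Both approaches consume the factor of $2$ in the hypothesis $p = 2(\log n + \omega(1))/n$ in essentially the same place---the paper because each left vertex participates in two independent subproblems on $\approx n_L$ vertices, and you because $p_1 \approx p/2$.
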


\begin{proof}
Let $G = (L, R, E)$. We consider two cases.
\begin{itemize}
\item \underline{Case I}: $n_R \leq n_L$. 
In this case, let $R'$ be a set of $n_L - n_R$ additional right vertices. 
Consider a graph $G' = (L, R \cup R', E \cup E')$, where we include each pair $(\ell, r') \in L \times R'$ in $E'$  independently with probability $p$. 
We have $G' \sim \cG(n_L, n_L, p)$. 
Thus, by \Cref{lem:er-matching}, a perfect matching in $G'$ exists with high probability.\footnote{Recall that $n_L \in [n/2, n]$; this means that $n \to \infty$ is equivalent to $n_L \to \infty$. 
Since $p = 2(\log n + \omega(1)) / n \geq (\log n_L + \omega(1)) / n_L$, \Cref{lem:er-matching} ensures that the matching exists with high probability.} 
This immediately implies that a right-saturated ($2$-)matching in $G$ exists with high probability.
\item \underline{Case II}: 
$n_R > n_L$. 
From the conditions on $n_L$ and $n_R$, we have $n_R \leq 2 \cdot n_L$.
Partition $R$ into two sets $R_1, R_2$ where $R_1$ has $n_L$ vertices. 
The graph $G_1 = (L, R_1, E)$ is distributed according to $\cG(n_L, n_L, p)$; thus, \Cref{lem:er-matching} implies that a perfect matching exists in $G_1$ with high probability.
Furthermore, by a similar reasoning to Case I, we have that a right-saturated matching exists in $G_2 = (L, R_2, E)$ with high probability. 
By combining these two matchings, we conclude that a right-saturated $2$-matching exists in $G$ with high probability.
\end{itemize}
The two cases together complete the proof.
\end{proof}

\section{Envy-Freeness}

In this section, we prove our results on envy-freeness (\Cref{thm:upper-main,thm:lb-main}).

\subsection{Non-Existence}

We start with the non-existence (\Cref{thm:lb-main}).
For each agent $i\in N$, let $Z_i$ denote $i$'s favorite chore.
For each chore $j\in M$, let $N^{\fav}_j$ denote the set of agents whose favorite chore is $j$, i.e., $N^{\fav}_j := \{i \in N \mid Z_i = j\}$. 
We say that a chore $j$ is a \emph{repeated favorite chore} if it is a favorite chore of more than one agent, i.e., $|N^{\fav}_j| > 1$.
Let $T$ denote the number of repeated favorite chores.
The following lemma relates $T$ to whether an envy-free allocation can exist.

\begin{lemma} \label{lem:lb-from-repeated-chores}
If the number of repeated favorite chores is more than $2(m - n)$, then no envy-free allocation exists.
\end{lemma}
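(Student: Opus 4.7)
The plan is to argue by contraposition: assume an envy-free allocation $A = (A_1, \ldots, A_n)$ exists and show that $T \le 2(m-n)$. The argument splits into a structural observation about where repeated favorite chores can sit, followed by a short counting step.

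First I would observe that, almost surely, every bundle in an envy-free allocation is nonempty. Indeed, since all disutilities are positive (by non-atomicity), an agent with an empty bundle has disutility $0$, and any other agent with a nonempty bundle would envy her. Hence $|A_i| \ge 1$ for every $i$, which in particular requires $m \ge n$ (otherwise the lemma's conclusion is vacuous since $2(m-n) < 0 \le T$ anyway).

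Next comes the key structural observation: if $j$ is a repeated favorite chore and $j \in A_i$, then $|A_i| \ge 2$. Suppose to the contrary that $A_i = \{j\}$. Pick any other agent $i' \in N^{\fav}_j \setminus \{i\}$. Since $j$ is $i'$'s (unique) favorite, $d_{i'}(j) < d_{i'}(j^*)$ for every $j^* \ne j$. The bundle $A_{i'}$ is nonempty and does not contain $j$, so
\[
d_{i'}(A_{i'}) \;\ge\; \min_{j^* \in A_{i'}} d_{i'}(j^*) \;>\; d_{i'}(j) \;=\; d_{i'}(A_i),
\]
meaning $i'$ envies $i$, contradicting envy-freeness.

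Finally, I would carry out the counting. Let $S := \{i \in N : |A_i| \ge 2\}$, so that the agents in $N \setminus S$ each hold exactly one chore. Since $|S| + |N \setminus S| = n$ and the bundles partition $M$,
\[
\sum_{i \in S} |A_i| \;=\; m - |N \setminus S| \;=\; m - n + |S|.
\]
On the other hand, $\sum_{i \in S} |A_i| \ge 2|S|$, which yields $|S| \le m - n$, and therefore $\sum_{i \in S} |A_i| \le 2(m-n)$. By the structural observation, every repeated favorite chore lies in some $A_i$ with $i \in S$, so the total number of repeated favorite chores is at most $\sum_{i \in S} |A_i| \le 2(m-n)$. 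This contradicts the hypothesis $T > 2(m-n)$, completing the proof.

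The only nontrivial step is the structural observation; the rest is bookkeeping. A minor subtlety to keep in mind is the use of non-atomicity to guarantee both strict positivity of all disutilities and the uniqueness of each agent's favorite chore, which is what makes the inequality $d_{i'}(A_{i'}) > d_{i'}(j)$ strict. No additional assumption on $\cD$ (in particular, no PDF-boundedness) is needed here.
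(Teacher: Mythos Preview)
Your proof is correct and follows essentially the same approach as the paper: both hinge on the observation that a repeated favorite chore cannot sit alone in a singleton bundle, together with a pigeonhole count bounding the total number of chores in non-singleton bundles by $2(m-n)$. Your counting via the set $S$ is organized a bit more directly than the paper's nested contradictions, but the content is the same.
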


\begin{proof}
Suppose for contradiction that there exists an envy-free allocation with more than $2(m-n)$ repeated favorite chores.
Since every disutility is positive, no agent can receive an empty bundle.
If at most  $2n - m - 1$ chores are in a bundle of size one, then the total number of chores is at least 
\begin{align*}
&(2n-m-1) + 2(n-(2n-m-1)) \\
&= (2n-m-1) + 2(m+1-n) = m+1,
\end{align*}
a contradiction.
Therefore, at least $2n - m$ chores must be in a bundle of size one. 
If no repeated favorite chore is in a bundle of size one, then the total number of chores is more than $(2n-m) + 2(m-n) = m$, a contradiction.
Hence, there is a repeated favorite chore in a bundle of size one.
This means that there exists an agent who does not receive this chore but for whom the chore is her favorite chore.
This agent will therefore envy the agent who receives this chore, yielding the desired contradiction.
\end{proof}

Next, we calculate the expectation of $T$, the number of repeated favorite chores.

\begin{lemma} \label{lem:num-repeated-expectation}
$\E[T] \geq m \left(1 - \left(1 + \frac{n-1}{m}\right) e^{-\frac{n-1}{m}}\right)$.
\end{lemma}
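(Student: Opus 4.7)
The plan is to express $T$ as a sum of indicators over chores and apply linearity of expectation after identifying the distribution of $|N^{\fav}_j|$ exactly.

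First, I would fix a chore $j \in M$ and argue that $|N^{\fav}_j| \sim \Bin(n, 1/m)$. The key observations are: (i) for each agent $i$, the disutilities $d_i(1),\dots,d_i(m)$ are i.i.d.\ draws from $\cD$, so by symmetry $\Pr[Z_i = j] = 1/m$; and (ii) the random variables $Z_1,\dots,Z_n$ are independent, since different agents have independent disutility samples. Therefore $|N^{\fav}_j| = \sum_{i \in N} \ind[Z_i = j]$ is a sum of $n$ independent Bernoulli$(1/m)$ indicators.

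Second, I would compute
\begin{align*}
\Pr[|N^{\fav}_j| \ge 2]
&= 1 - \Pr[|N^{\fav}_j| = 0] - \Pr[|N^{\fav}_j| = 1] \\
&= 1 - (1 - 1/m)^n - (n/m)(1 - 1/m)^{n-1},
\end{align*}
and, since $T = \sum_{j \in M} \ind[|N^{\fav}_j| \ge 2]$, apply linearity of expectation together with symmetry across chores to get
\begin{align*}
\E[T] = m - m(1 - 1/m)^n - n(1 - 1/m)^{n-1}.
\end{align*}

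Third, I would factor out $(1-1/m)^{n-1}$ from the two subtracted terms:
\begin{align*}
m(1-1/m)^n + n(1-1/m)^{n-1} = (m + n - 1)(1 - 1/m)^{n-1},
\end{align*}
using $m(1-1/m) = m-1$. Finally, applying the elementary inequality $1 - x \le e^{-x}$ with $x = 1/m$ gives $(1 - 1/m)^{n-1} \le e^{-(n-1)/m}$, and therefore
\begin{align*}
\E[T] \ge m - (m + n - 1) e^{-(n-1)/m} = m\left(1 - \left(1 + \frac{n-1}{m}\right)e^{-\frac{n-1}{m}}\right),
\end{align*}
as claimed.

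I do not anticipate any real obstacle: once one recognizes the binomial distribution of $|N^{\fav}_j|$, the rest is arithmetic and a single application of $1-x \le e^{-x}$. The only point to double-check is the independence of the $Z_i$'s (which follows from independence of agents' disutility vectors, not from any symmetry among the chores).
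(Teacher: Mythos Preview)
Your proposal is correct and essentially identical to the paper's own proof: both write $T$ as a sum of indicators over chores, use the fact that the $Z_i$ are independent and uniform on $M$ to compute $\Pr[|N^{\fav}_j|\ge 2]$ exactly, and then apply $1-x\le e^{-x}$ (equivalently $1+x\le e^x$) to $(1-1/m)^{n-1}$. The only cosmetic difference is that you name the binomial distribution explicitly and factor the subtracted terms before applying the exponential bound, whereas the paper factors out $m$ first.
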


\begin{proof}
By linearity of expectation, we have 
\begin{align*}
\E[T] 
&= \E\left[\sum_{j \in M} \ind[|N^{\fav}_j| > 1] \right] \\
&= \sum_{j \in M}\E\left[ \ind[|N^{\fav}_j| > 1] \right] \\
&= \sum_{j \in M} \Pr[|N^{\fav}_j| > 1] \\
&= m - \sum_{j \in M} \left(\Pr[|N^{\fav}_j| = 0] + \Pr[|N^{\fav}_j| = 1]\right) \\
&= m - \sum_{j \in M} \Bigg(\Pr[Z_1, \dots, Z_n \ne j] + \sum_{i \in N} \Pr[Z_i = j \\
&\qquad\qquad\qquad\wedge Z_1, \dots, Z_{i-1}, Z_{i+1}, \dots, Z_n \ne j]\Bigg).
\end{align*}
Observe that $Z_1, \dots, Z_n$ are distributed uniformly at random from $M$. Thus, we have
\begin{align*}
\E[T] &= m - \sum_{j \in M} \left(\left(1 - \frac{1}{m}\right)^n + n \cdot \frac{1}{m} \cdot \left(1 - \frac{1}{m}\right)^{n - 1} \right) \\
&= m \left(1 - \left(1 + \frac{n - 1}{m}\right)\left(1 - \frac{1}{m}\right)^{n - 1}\right) \\
&\geq m \left(1 - \left(1 + \frac{n - 1}{m}\right) e^{-\frac{n-1}{m}}\right),
\end{align*}
where the inequality follows from the estimate $1 + x \le e^x$, which holds for all real numbers $x$.
\end{proof}

With all the ingredients in place, we are ready to prove \Cref{thm:lb-main}.

\begin{proof}[Proof of \Cref{thm:lb-main}]
Let $\tnu = m/(n-1)$ and 
\[
f(x) = 2 - x\left(1 + \left(1 + \frac{1}{x}\right)e^{-\frac{1}{x}}\right).
\]
We have 
\[
f'(x) = -\frac{e^{-1/x}((e^{1/x}+1)x^2+x+1)}{x^2},
\]
so $f$ is decreasing on $(0, \infty)$.
Since $m/n \leq \nu - \eps$, it holds that $\tnu \leq \nu - \eps/2$ when $n$ is sufficiently large. 
By \Cref{lem:num-repeated-expectation}, we have
\begin{align} \label{eq:exp-diff-lb}
&\E[T] - 2(m - n) \nonumber \\
&\geq m \left(1 - \left(1 + \frac{n-1}{m}\right) e^{-\frac{n-1}{m}}\right) - 2(m - n) \nonumber \\
&= (n-1) \cdot f(\tnu) + 2 
\geq (n-1) \cdot f\left(\nu - \eps/2\right).
\end{align}
Since $f$ is decreasing on $(0, \infty)$ and $f(\nu) = 0$, we deduce that $f(\nu - \eps/2)$ is a positive constant, and $\E[T] \ge 2(m-n)$.

Observe that $T$ is a function of $Z_1, \dots, Z_n$ which is $1$-bounded (i.e., changing a single $Z_i$ changes the value of $T$ by at most one). 
By \Cref{lem:es-ineq}, we have $\var(T) \leq n/4$. 
Applying Chebyshev's inequality, we get
\begin{align*}
\Pr[T \leq 2(m - n)] 
&\leq \frac{\var(T)}{\left(\E[T] - 2(m - n)\right)^2} \\
&\overset{\eqref{eq:exp-diff-lb}}{\leq} \frac{n}{4 \cdot (n-1)^2 \cdot f(\nu - \eps/2)^2} = O\left(\frac{1}{n}\right). 
\end{align*}
Finally, by \Cref{lem:lb-from-repeated-chores}, this implies that with high probability, no envy-free allocation exists.
\end{proof}

\subsection{Existence}

We now proceed to the existence (\Cref{thm:upper-main}). 
As discussed in \Cref{sec:overview}, this is based on two theorems depending on whether $m$ is ``large'' relative to $n$.

\begin{theorem} \label{thm:welfare-max}
For any PDF-bounded distribution $\cD$, there exists a constant $c$ such that, if $m \geq c n \log n$, then there exists a polynomial-time algorithm that finds an envy-free allocation with high probability.
\end{theorem}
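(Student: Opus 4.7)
The plan is to analyze the natural social-welfare-minimizing algorithm from Dickerson et al.: assign each chore $j$ to an agent in $\argmin_{i \in N} d_i(j)$ (ties broken arbitrarily). This runs in polynomial time, and I will argue that for a sufficiently large constant $c$ depending only on $\alpha, \beta$, the resulting allocation $A = (A_1, \dots, A_n)$ is envy-free with high probability whenever $m \ge cn\log n$.

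To prove envy-freeness, I would fix an ordered pair of agents $(i, i')$ with $i \ne i'$ and bound the probability that $i$ envies $i'$. Decompose $d_i(A_i) = \sum_{j \in M} Y_j$ and $d_i(A_{i'}) = \sum_{j \in M} W_j$, where $Y_j := d_i(j) \cdot \ind[j \in A_i]$ and $W_j := d_i(j) \cdot \ind[j \in A_{i'}]$. The key observation is that each $Y_j$ (and each $W_j$) depends only on the disutility vector $(d_1(j), \dots, d_n(j))$ of chore $j$, so the $Y_j$'s are independent across $j$ and lie in $[0,1]$, and likewise for the $W_j$'s---precisely the setup for \Cref{lem:chernoff}.

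Next I would compute the two expectations using PDF-boundedness ($f_\cD \in [\alpha, \beta]$). Since $F_\cD(x) \ge \alpha x$ on $[0,1]$, we have $(1-F_\cD(x))^{n-1} \le e^{-\alpha(n-1)x}$, giving
\[
\E[Y_j] = \int_0^1 x\, f_\cD(x) (1 - F_\cD(x))^{n-1}\, \diff{x} = O(1/n^2),
\]
so $\E[d_i(A_i)] = O(m/n^2)$. On the other hand, by symmetry over the $n-1$ agents other than $i$, $\E[d_i(A_{i'})] = (\E[d_i(M)] - \E[d_i(A_i)])/(n-1)$, and since $\E_{X \sim \cD}[X] = \Theta(1)$ by PDF-boundedness, this equals $\Theta(m/n)$. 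Thus the two expectations differ by a factor of $\Theta(n)$, and I only need concentration good enough to separate them.

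Finally I would apply \Cref{lem:chernoff} to both tails. Writing $\mu := \E[d_i(A_{i'})] = \Theta(m/n)$, the lower tail gives $\Pr[d_i(A_{i'}) \le \mu/2] \le \exp(-\Omega(\mu)) = \exp(-\Omega(m/n))$. For the upper tail on $d_i(A_i)$, setting $\mu/4 = (1+\delta) \E[d_i(A_i)]$ forces $\delta = \Theta(n)$, and the Chernoff bound yields $\exp(-\delta^2 \E[d_i(A_i)]/(2+\delta)) = \exp(-\Theta(\delta \E[d_i(A_i)])) = \exp(-\Theta(m/n))$. So both bad events have probability $n^{-\Omega(c)}$, and a union bound over the $n(n-1)$ ordered pairs gives total failure probability $O(n^{2-\Omega(c)}) = o(1)$ once $c$ is chosen large enough. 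The main obstacle is this upper-tail calculation: $\E[d_i(A_i)]$ is tiny ($O(m/n^2)$), so one must take a huge multiplicative deviation $\delta = \Theta(n)$ and carefully check that the Chernoff exponent still equals $\Theta(m/n)$ (matching the lower-tail bound on $d_i(A_{i'})$), which is precisely what dictates the $n\log n$ threshold on $m$ for this approach.
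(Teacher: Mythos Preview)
Your proposal is correct and follows the same high-level strategy as the paper: run the cost-minimizing algorithm, write $d_i(A_i)$ and $d_i(A_{i'})$ as sums of independent $[0,1]$-valued variables indexed by chores, and apply the Chernoff bound to each side before taking a union bound over pairs. The technical execution differs, though. The paper proves a separate lemma showing only that the per-chore means satisfy $\mu^* \le (\mu - \var(\cD))/n$ and $\mu' \ge \mu/n$; thus both means are $\Theta(1/n)$ but separated by the constant factor $\var(\cD)/\mu$. It then picks the common threshold $\frac{m}{n}(\mu - 0.5\var(\cD))$ and applies Chernoff on each side with a \emph{constant} deviation $\delta_0 = 0.5\var(\cD)/\mu$, carefully verifying $\delta,\delta' \ge \delta_0$. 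You instead exploit PDF-boundedness more aggressively, using $F_\cD(x) \ge \alpha x$ to get the sharper estimate $\E[Y_j] = O(1/n^2)$, so that the two expectations differ by a factor of $\Theta(n)$; this lets you take a crude threshold $\mu/4$ and handle the upper tail with $\delta = \Theta(n)$. Your route is arguably more direct (no variance lemma, no delicate threshold), while the paper's route uses a weaker mean estimate that would already work for any distribution with positive variance, at the cost of a slightly more careful Chernoff setup.
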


\begin{theorem} \label{thm:indiv-matching}
For any PDF-bounded distribution $\cD$, if $2n \leq m \leq n^{8/7}$, then there exists a polynomial-time algorithm that finds an envy-free allocation with high probability.
\end{theorem}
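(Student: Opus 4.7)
The plan is to execute the three-step strategy described in Case~III of the overview. Let $k := \lfloor m/n \rfloor \ge 2$ and $r := m - kn \in \{0, 1, \ldots, n-1\}$. First, apply the Case~II balanced envy-free algorithm of Manurangsi--Suksompong 2020 (transported to chores by complementing utilities) to the first $kn$ chores to obtain a balanced allocation $A = (A_1, \ldots, A_n)$ with $|A_i| = k$ for all $i$; by Case~II this allocation is envy-free with high probability on this range of $k$. If $r = 0$ we are already done, so assume henceforth that $r \ge 1$.

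Fix a threshold $\tau := C \log n / n$ for a sufficiently large constant $C$ (depending on $\alpha$), and define the envy-gap $g_i := \min_{i' \ne i} d_i(A_{i'}) - d_i(A_i) \ge 0$. The key technical step is to show that, with high probability, the set $S := \{i : g_i \ge 2\tau\}$ satisfies $|S| \ge n/2$. I would proceed by bounding, for each ordered pair $(i, i')$, the probability that $d_i(A_{i'}) - d_i(A_i) < 2\tau$, combining the PDF-boundedness of $\cD$ with the structural guarantees of the Manurangsi--Suksompong 2020 algorithm (in particular that $A_i$ is a minimum-disutility bundle for $i$ among the bundles in $A$) to obtain an anti-concentration estimate, then applying a union bound over $i'$ to bound $\Pr[i \notin S]$, and finally invoking Markov's inequality across $i$ to conclude $|S| \ge n/2$ with high probability.

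Once the large-gap set $S$ is in hand, consider the bipartite graph $G$ with $S$ on the left and the $r$ remaining chores on the right, placing an edge $(i, j)$ whenever $d_i(j) \le \tau$. Because the disutilities on the remaining chores are independent of everything used so far, each edge is present independently with probability at least $\alpha \tau$, which exceeds $2(\log n + \omega(1))/n$ once $C > 2/\alpha$. Since $n_L = |S| \ge n/2$ and $n_R = r \le n$, Lemma~\ref{lem:matching} yields a right-saturated $2$-matching with high probability. Augment the allocation by giving each $i \in S$ the set $J_i$ of at most two chores matched to her (so $d_i(J_i) \le 2\tau \le g_i$), and leaving every other agent with their original bundle $A_i$.

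Envy-freeness of the augmented allocation then follows from the defining gap inequality: for $i \in S$ and any $i'$, $d_i(A_i \cup J_i) \le d_i(A_i) + 2\tau \le d_i(A_i) + g_i \le d_i(A_{i'}) \le d_i(A_{i'} \cup J_{i'})$, using nonnegativity of disutilities for the last step; for $i \notin S$ envy-freeness is immediate from the envy-freeness of $A$ and nonnegativity of disutilities. The whole procedure is polynomial time (the Case~II step, graph construction, and matching search all admit polynomial algorithms). The main obstacle will be the gap analysis in the second step: pinning down $\Pr[g_i < 2\tau]$ requires carefully handling the intricate dependence between $A$ and the very disutilities that determined it, and it is plausibly here that the upper bound $m \le n^{8/7}$ on the problem size enters, by bounding $k \le n^{1/7}$ so that the relevant anti-concentration and union bounds can be closed.
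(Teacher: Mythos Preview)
Your plan matches the paper's proof essentially step for step: apply $\algdiv$ to the first $\lfloor m/n\rfloor\cdot n$ chores, isolate the agents with envy-gap at least $2\tau$ (the paper's set $N^1$), build the bipartite graph on those agents and the leftover chores with threshold $\tau = \Theta(\log n / n)$, and invoke \Cref{lem:matching} for a right-saturated $2$-matching; the envy-freeness verification and the role of $m\le n^{8/7}$ (i.e., $r\le n^{1/7}$) are exactly as you describe. For the step you flag as the main obstacle, the paper does not use anti-concentration but rather the \emph{quantitative} guarantee of \Cref{thm:divisible} that $d_{i'}(A^0_{i'}) \le \xi = O\!\left(\frac{r\log(rn)}{n}\right)$ for every $i'$: this forces $A^0_{i'}\subseteq\{j: d_{i'}(j)\le\xi\}$, so if additionally $d_i(A^0_{i'})<\xi+2\tau$ then at least two chores lie in $\{j: d_{i'}(j)\le\xi\}\cap\{j: d_i(j)\le\xi+2\tau\}$, an event whose probability is bounded by a union bound over chore pairs and then fed into Markov's inequality exactly as you anticipated.
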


Note that \Cref{thm:upper-main} is an immediate consequence of these two theorems.

For \Cref{thm:welfare-max}, we use the ``cost-minimizing algorithm'', which allocates each chore $j$ to the agent with the smallest disutility for it, that is, the agent $\argmin_{i \in N} d_i(j)$.
Clearly, this algorithm runs in polynomial time.
The proof of its correctness is an adaptation of the analogous proof for the welfare-maximizing algorithm in the case of goods \citep{DickersonGoKa14}.

For every distinct $i, i' \in N$ and $j \in M$, let us define the following random variables:
\begin{align*}
X_i^j &=
\begin{cases}
d_i(j) &\text{ if } i = \argmin_{i_0 \in N} d_{i_0}(j); \\
0 &\text{ otherwise;}
\end{cases} \\
Y_{i, i'}^j &=
\begin{cases}
d_i(j) &\text{ if } i' = \argmin_{i_0 \in N} d_{i_0}(j); \\
0 &\text{ otherwise.}
\end{cases}
\end{align*}
Let $A = (A_1, \dots, A_n)$ be the output allocation. 
Observe that $d_i(A_i) = \sum_{j \in M} X_i^j$ and $d_i(A_{i'}) = \sum_{j \in M} Y_{i, i'}^j$.

Let $\mu$ be the mean of $\cD$, $\mu^* := \E[X_i^j]$, and $\mu' := \E[Y_{i, i'}^j]$ for some fixed $i,i',j$. 
We prove the following lemma.\footnote{Note that our proof implicitly shows that $\mu \geq \var(\cD)$.}

\begin{lemma} \label{lem:mean-bound-welfare-max}
$\mu^* \leq (\mu - \var(\cD)) / n$ and $\mu' \geq \mu / n$.
\end{lemma}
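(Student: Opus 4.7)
The plan is to reduce both inequalities to computations about the order statistics of $n$ i.i.d.\ samples $D_1,\dots,D_n$ from $\cD$, where $D_i := d_i(j)$. First I would observe that by the definition of $X_i^j$ and symmetry, $\sum_{i=1}^n X_i^j = \min_k D_k$, and since every term in the sum has the same expectation $\mu^*$, we obtain the clean identity
\[
n\mu^* = \E\bigl[\min_{k} D_k\bigr].
\]
The first bound $\mu^* \le (\mu - \var(\cD))/n$ is therefore equivalent to $\E[\min_k D_k] \le \mu - \var(\cD)$.

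To prove this, I would use the monotone bound $\min_k D_k \le \min(D_1, D_2)$ (valid because $n \ge 2$) together with the elementary identity $\min(D_1,D_2) = \tfrac{1}{2}(D_1+D_2 - |D_1-D_2|)$, which gives
\[
\E\bigl[\min_k D_k\bigr] \le \mu - \tfrac{1}{2}\E|D_1 - D_2|.
\]
The remaining step is to lower-bound $\E|D_1-D_2|$ by $2\var(\cD)$. Here I would exploit the fact that $D_1,D_2 \in [0,1]$, so $|D_1-D_2| \le 1$ and consequently $|D_1-D_2| \ge (D_1-D_2)^2$; taking expectations and expanding gives $\E|D_1-D_2| \ge \E[(D_1-D_2)^2] = 2\var(\cD)$. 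Combining yields the desired inequality. This is the main technical step, and the key insight is using the boundedness of the support to convert a first absolute moment into a second moment in the right direction.

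For the second inequality $\mu' \ge \mu/n$, I would again exploit symmetry. Fix $i$ and note that for any $i'\neq i$, by symmetry across the other agents, all values $\E[Y_{i,i'}^j]$ are equal, so
\[
(n-1)\mu' = \sum_{i' \ne i} \E\bigl[D_i \cdot \ind[D_{i'} = \min_k D_k]\bigr] = \E\bigl[D_i \cdot \ind[D_i \ne \min_k D_k]\bigr].
\]
Splitting $\mu = \E[D_i]$ according to whether $D_i$ is the minimum gives $\mu = \mu^* + (n-1)\mu'$, hence $\mu' = (\mu - \mu^*)/(n-1)$. The claim $\mu' \ge \mu/n$ then simplifies algebraically to $n\mu^* \le \mu$, which is immediate from the identity $n\mu^* = \E[\min_k D_k] \le \E[D_1] = \mu$ established above. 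So once the identity for $\mu^*$ is in hand, this half is essentially a one-line rearrangement, and no further use of PDF-boundedness is needed for it.
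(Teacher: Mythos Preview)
Your proof is correct and follows essentially the same approach as the paper: the identity $n\mu^* = \E[\min_k D_k]$, the bound via $\min(D_1,D_2)$ and the trick $|D_1-D_2|\ge (D_1-D_2)^2$ from boundedness, and the decomposition $\mu = \mu^* + (n-1)\mu'$ leading to $\mu' \ge \mu/n$ are exactly the steps the paper uses.
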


\begin{proof}
To bound $\mu^*$, note that since $d_1(j), \dots, d_n(j)$ are independent and identically distributed,
\begin{align*}
\mu^* 
= \E[X_i^j] 
&= \Pr[i = \argmin_{i_0 \in N} d_{i_0}(j)] \\
&\quad \cdot \E[X_i^j \mid i = \argmin_{i_0 \in N} d_{i_0}(j)] \\
&= \frac{1}{n} \cdot \E[\min\{d_1(j), \dots, d_n(j)\}]. 
\end{align*}
We can bound the last term further as follows:
\begin{align*}
&\E[\min\{d_1(j), \dots, d_n(j)\}] \\
&\leq \E[\min\{d_1(j), d_2(j)\}] \\
&= \E\left[\frac{d_1(j) + d_2(j)}{2} - \frac{|d_1(j) - d_2(j)|}{2}\right] \\
&= \mu - \frac{1}{2} \cdot \E\left[|d_1(j) - d_2(j)|\right] \\
&\leq \mu - \frac{1}{2} \cdot \E\left[(d_1(j) - d_2(j))^2\right] \\
&= \mu - \var(\cD),
\end{align*}
where the second inequality holds because the disutilities are in $[0, 1]$.
This gives the desired bound on $\mu^*$.

Now, by symmetry, we have
\begin{align*}
\mu = \E[d_i(j)] &= \E\left[X_i^j + \sum_{i'' \in N \setminus \{i\}} Y_{i, i''}^j\right] \\
&= \mu^* + (n - 1) \mu'.
\end{align*}
It follows that $\mu' = \frac{\mu - \mu^*}{n - 1} \geq \frac{\mu}{n}$, as desired.
\end{proof}

We are ready to prove \Cref{thm:welfare-max}.

\begin{proof}[Proof of \Cref{thm:welfare-max}]
Since $\cD$ is PDF-bounded and non-atomic, its variance is finite and positive. 
Let $\delta_0 = \frac{0.5 \var(\cD)}{\mu}$ and $c = \frac{100}{\delta_0^2\left(\mu - 0.5 \cdot \var(\cD)\right)}$, and suppose that $m \geq c n \log n$.
Since $\var(\cD) \le \mu$, we have $\delta_0 \le 0.5$.

Fix any distinct $i, i' \in N$. 
Recall that $d_i(A_i) = \sum_{j \in M} X_i^j$ is a sum of $m$ independent random variables whose values are bounded in $[0, 1]$. 
Let $\delta = \frac{\mu - 0.5 \cdot \var(\cD)}{n\mu^*} - 1$.
We have
\begin{align*}
\mu(\mu-0.5\var(\cD))
&= \mu^2 - 0.5\var(\cD)\mu \\
&\ge \mu^2 - 0.5\var(\cD)\mu - 0.5\var(\cD)^2 \\
&= (\mu + 0.5\var(\cD))(\mu - \var(\cD)) \\
&\ge (\mu + 0.5\var(\cD))\cdot n\mu^*,
\end{align*}
where the last inequality follows from \Cref{lem:mean-bound-welfare-max}.
Dividing both sides by $n\mu\mu^*$ and rearranging, we obtain
\begin{align*}
\frac{\mu - 0.5\var(\cD)}{n\mu^*} - 1 \ge \frac{0.5\var(\cD)}{\mu},
\end{align*}
which implies that $\delta \ge \delta_0$.

Applying \Cref{lem:chernoff} with $\delta$, we get
\begin{align*}
&\Pr\left[d_i(A_i) \geq \frac{m}{n}(\mu - 0.5 \cdot \var(\cD))\right] \\
&\leq \exp\left(-\frac{\delta^2}{(1 + \delta)(2 + \delta)} \cdot \frac{m}{n}\left(\mu - 0.5 \cdot \var(\cD)\right)\right) \\
&\leq  \exp\left(-\frac{(\delta_0)^2}{(1 + \delta_0)(2 + \delta_0)} \cdot \frac{m}{n}\left(\mu - 0.5 \cdot \var(\cD)\right)\right) \\
&\leq \exp(-10 \log n) \\
&= \frac{1}{n^{10}}.
\end{align*}
Here, the second inequality follows from the fact that the function $\frac{x^2}{(1 + x)(2 + x)}$ is increasing on $(0, \infty)$ (as its derivative is $\frac{x(3x+4)}{(1+x)^2(2+x)^2} > 0$) and $\delta \geq \delta_0$, and the third inequality from our choice of $c$ and the fact that $\delta_0 \le 0.5$.

Similarly, let $\delta' = 1 - \frac{\mu - 0.5 \cdot \var(\cD)}{n\mu'}$.
By \Cref{lem:mean-bound-welfare-max}, we have
\begin{align*}
n\mu'(\mu-0.5\var(\cD))
&\ge \mu(\mu-0.5\var(\cD)).
\end{align*}
Dividing both sides by $n\mu\mu'$ and rearranging, we get
\begin{align*}
1 - \frac{\mu - 0.5\var(\cD)}{n\mu'} \ge \frac{0.5\var(\cD)}{\mu},
\end{align*}
which implies that $\delta' \ge \delta_0$.

Since $d_i(A_{i'}) = \sum_{j \in M} Y_{i, i'}^j$, applying \Cref{lem:chernoff} with $\delta'$, we get 
\begin{align*}
&\Pr\left[d_i(A_{i'}) \leq \frac{m}{n}(\mu - 0.5 \cdot \var(\cD))\right] \\
&\leq \exp\left(-\frac{(\delta')^2}{2} \cdot m\mu' \right) \\
&\leq \exp\left(-\frac{(\delta')^2}{2} \cdot \frac{m}{n} \cdot \mu\right) \\
&\leq \exp\left(-\frac{(\delta_0)^2}{2} \cdot \frac{m}{n} \cdot \mu\right) \leq \exp(-10 \log n) = \frac{1}{n^{10}},
\end{align*}
where the second inequality follows from \Cref{lem:mean-bound-welfare-max}, the third inequality from the fact that $\delta' \ge \delta_0$, and the last inequality from our choice of $c$.

Thus, by the union bound, we have $d_i(A_i) \ge d_i(A_{i'})$ with probability at most $2/n^{10}$.
Taking the union bound over all distinct $i, i' \in N$, we find that the probability that some agent envies some other agent with respect to the allocation~$A$ is at most $2/n^8 = o(1)$.
It follows that $A$ is envy-free with high probability, as desired.
\end{proof}

Next, we will establish \Cref{thm:indiv-matching} via a matching-based approach.
We start with the case where $m$ is divisible by $n$.
We say that an allocation $A = (A_1, \dots, A_n)$ is \emph{balanced} if $|A_1| = \dots = |A_n|$. 
When $m = rn$ for some integer $r \geq 2$, the algorithm by \citet{ManurangsiSu20} finds a balanced allocation that is envy-free for the case of goods. 
It is not difficult to see that this implies a similar algorithm for chores, as we may simply set the ``utility'' $u_i(j)$ to be $1 - d_i(j)$; a balanced allocation is envy-free with respect to these utilities if and only if it is envy-free with respect to the disutilities. 
Moreover, their algorithm ensures that every agent's total utility for the goods that she receives is at least $r\left(1  - O\left(\frac{\log(rn)}{n}\right)\right)$; in the chore setting, this translates to a total disutility of at most $O\left(\frac{r \log(rn)}{n}\right)$. 
We summarize these properties more formally below.

\begin{theorem}[\citet{ManurangsiSu20}] \label{thm:divisible}
Suppose that $m = rn$ for some integer $r$ such that $2 \leq r \leq e^{n^{0.1}}$. 
For any $(\alpha, \beta)$-PDF-bounded distribution $\cD$, there exists a polynomial-time algorithm $\algdiv$ that, with high probability, finds a balanced envy-free allocation $A = (A_1, \dots, A_n)$ such that $d_i(A_i) \leq \frac{c r \log(rn)}{n}$ for all $i\in N$, where $c > 0$ is a constant depending only on $\alpha, \beta$.
\end{theorem}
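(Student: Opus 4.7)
The plan is to reduce the chore-allocation instance to a goods-allocation instance and then invoke the Manurangsi--Suksompong (2020) algorithm for goods as a black box. The transformation will preserve balancedness and, crucially, envy-freeness of balanced allocations, and the claimed disutility upper bound will follow directly from their utility lower bound.

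Concretely, I would proceed in three steps. First, define goods utilities $u_i(j) := 1 - d_i(j)$ for every $(i,j) \in N \times M$. Since $d_i(j)$ is drawn from an $(\alpha,\beta)$-PDF-bounded distribution on $[0,1]$, the random variable $u_i(j)$ has density $x \mapsto f_\cD(1 - x)$, which is also bounded in $[\alpha, \beta]$ on $[0,1]$; so the $u_i(j)$'s form i.i.d. samples from an $(\alpha,\beta)$-PDF-bounded distribution. Second, observe that for any balanced allocation with $|A_i| = |A_{i'}| = r$,
\[
u_i(A_i) - u_i(A_{i'}) = \sum_{j \in A_i}(1 - d_i(j)) - \sum_{j \in A_{i'}}(1 - d_i(j)) = d_i(A_{i'}) - d_i(A_i),
\]
so $u_i(A_i) \ge u_i(A_{i'})$ iff $d_i(A_i) \le d_i(A_{i'})$. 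Hence a balanced allocation is envy-free with respect to the utilities $u_i$ if and only if it is envy-free with respect to the disutilities $d_i$.

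Third, apply the Manurangsi--Suksompong (2020) goods algorithm to the transformed instance in the regime $2 \le r \le e^{n^{0.1}}$. In polynomial time, and with high probability, it outputs a balanced allocation $A = (A_1, \dots, A_n)$ that is envy-free for $u_i$ and satisfies $u_i(A_i) \ge r\bigl(1 - c' \log(rn)/n\bigr)$ for every $i$, where $c'$ depends only on $\alpha, \beta$. By the equivalence in the previous step, $A$ is envy-free for the chore disutilities, and
\[
d_i(A_i) = r - u_i(A_i) \le r - r\left(1 - \frac{c' \log(rn)}{n}\right) = \frac{c' \, r \log(rn)}{n},
\]
which is the desired bound with $c := c'$.

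The proof is essentially a reduction, so there is no single conceptually hard step; the only real care point is making sure the cited prior result is invoked correctly, namely that it guarantees all three properties simultaneously (balancedness, envy-freeness for goods, and the per-agent utility lower bound). Verifying that the PDF-boundedness constants are preserved under $x \mapsto 1-x$ is immediate, and the arithmetic translating the utility lower bound into the disutility upper bound is a one-line computation.
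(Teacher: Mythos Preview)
Your proposal is correct and follows essentially the same approach as the paper. The paper does not give a formal proof of this theorem; it simply notes (in the paragraph preceding the statement) that one sets $u_i(j) := 1 - d_i(j)$, observes that balanced envy-freeness is preserved under this transformation, and that the utility lower bound $r(1 - O(\log(rn)/n))$ from \citet{ManurangsiSu20} translates to the disutility upper bound $O(r\log(rn)/n)$---exactly the three steps you carry out, with the added (and correct) remark that PDF-boundedness is preserved under $x \mapsto 1-x$.
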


We now proceed to the case where $m$ is not divisible by $n$. 
Recall that $\algdiv$ is the algorithm from the divisible case in \Cref{thm:divisible}. 
Our algorithm for the indivisible case is presented as \Cref{alg:two-stage-matching}, where $\tau$ is a parameter that we will choose below. 
Note that the algorithm runs in polynomial time since $\algdiv$ runs in polynomial time and a right-saturated $2$-matching can be found by creating two copies of each left vertex and finding a maximum matching.

\begin{algorithm}
\caption{Matching for Envy-Free Allocation}
\label{alg:two-stage-matching}
\begin{algorithmic}[1]
\Procedure{TwoStageAlgorithm$_{\tau}(N, M, \{d_i\}_{i\in N})$}{}
\State $r = \lfloor m/n \rfloor$
\State $M^0 \gets $ the first $rn$ chores
\State $A^0 = (A^0_1, \dots, A^0_n) \gets \algdiv(M^0)$
\State $M^1 \gets M \setminus M^0$
\State $N^1 \gets \{i \in N \mid d_i(A^0_i) \leq \min_{i' \in N \setminus \{i\}} d_i(A^0_{i'})  - 2\tau\}$
\State $E = \{(i, j) \in N^1 \times M^1 \mid d_i(j) \leq \tau\}$
\If{a right-saturated $2$-matching exists in the graph $G = (N^1, M^1, E)$}
\State $A^1 = (A^1_1, \dots, A^1_n) \gets$ the allocation corresponding to a right-saturated $2$-matching
\State \Return $A = (A^0_1 \cup A^1_1, \dots, A^0_n \cup A^1_n)$
\Else 
\State \Return NULL
\EndIf
\EndProcedure
\end{algorithmic}
\end{algorithm}

\begin{proof}[Proof of \Cref{thm:indiv-matching}]
Let $\cD$ be $(\alpha,\beta)$-PDF-bounded, and let $r := \lfloor m/n\rfloor \le n^{1/7}$.
We use \Cref{alg:two-stage-matching} with $\tau = \frac{3\log n}{\beta n}$.
For convenience, let $\xi := \frac{cr \log (rn)}{n}$, where $c$ is the constant in \Cref{thm:divisible}.
From that theorem, with high probability, the output (partial) allocation $A^0 = (A^0_1, \dots, A^0_n)$ satisfies the following conditions with high probability:
\begin{align}
d_i(A^0_i) \leq d_i(A^0_{i'}) & &\forall i, i' \in N; \label{eq:ef-partial} \\
d_i(A^0_i) \leq \xi & &\forall i \in N. \label{eq:val-ef}
\end{align}
For the remainder of this proof, we will assume that these conditions are satisfied.

We claim that if \Cref{alg:two-stage-matching} does not return NULL, then the output is envy-free. 
To see this, fix any distinct $i, i' \in N$ and consider two cases.
\begin{itemize}
\item \underline{Case I}: $i \notin N^1$. In this case, we have $$d_i(A_i) = d_i(A^0_i) \overset{\eqref{eq:ef-partial}}{\leq} d_i(A^0_{i'}) \leq d_i(A_{i'}).$$
\item \underline{Case II}: $i \in N^1$. In this case, we have
\begin{align*}
d_i(A_i) = d_i(A^0_i) + d_i(A^1_i) &\leq d_i(A^0_i) + 2\tau \\
&\leq d_i(A^0_{i'}) \leq d_i(A_{i'}),
\end{align*}
where the first inequality follows from the definition of~$E$ and the fact that $|A^1_i| \leq 2$, and the second inequality from the definition of $N^1$.
\end{itemize}
Thus, it suffices to show that a right-saturated $2$-matching exists in $G$ with high probability. 
To this end, we will prove that
\begin{align} \label{eq:LHS-size-whp}
\Pr[|N^1| < n/2] = o(1).
\end{align}
Before we prove \eqref{eq:LHS-size-whp}, let us argue why it gives us the desired result. 
Notice that for fixed $N^1$, the graph $G$ is drawn according to $\cG(|N^1|, m - rn, p)$ with $p = \Pr_{X \sim \cD}[X \leq \tau] \leq \beta \tau = 3 \log n/n$, where the inequality follows from the $(\alpha, \beta)$-PDF-boundedness of $\cD$ and the last equality from our choice of $\tau$. 
As a result, if $|N^1| \geq n/2$, \Cref{lem:matching} implies that a right-saturated $2$-matching exists in $G$ with high probability. 
From this and \eqref{eq:LHS-size-whp}, letting $Q$ denote the event that $G$ has a right-saturated $2$-matching, we have 
\begin{align*}
\Pr[Q] 
&\geq \Pr[Q \text{ and } |N^1| \geq n/2] \\
&= \Pr[Q \mid |N^1| \geq n/2]\cdot \Pr[|N^1| \geq n/2] \\
&= \Pr[Q \mid |N^1| \geq n/2] \\
&\quad- \Pr[Q \mid |N^1| \geq n/2]\cdot \Pr[|N^1| < n/2] \\
&\geq \Pr[Q \mid |N^1| \geq n/2] - \Pr[|N^1| < n/2]. \\
&\geq 1 - o(1),
\end{align*}
as desired.

We are now left to prove \eqref{eq:LHS-size-whp}. First, using Markov's inequality, \eqref{eq:val-ef}, and the union bound, we have
\begin{align}
&\Pr[|N^1| < n/2] \nonumber \\
&= \Pr[|N \setminus N^1| > n/2] \nonumber \\
&\leq \frac{2}{n} \cdot \E[|N \setminus N^1|] \nonumber \\
&= \frac{2}{n} \cdot \left(\sum_{i \in N} \Pr[i \notin N^1]\right) \nonumber \\
&\leq \frac{2}{n} \cdot \left(\sum_{i \in N} \Pr[\exists i' \in N \setminus \{i\}, d_i(A^0_{i'}) < \xi + 2\tau]\right) \nonumber \\
&\leq \frac{2}{n} \cdot \left(\sum_{i \in N} \sum_{i' \in N \setminus \{i\}} \Pr[d_i(A^0_{i'}) < \xi + 2\tau]\right). \label{eq:expand-small-LHS-prob}
\end{align}

Next, for any $\gamma > 0$ and $i\in N$, let $M^0_{i, \leq \gamma}$ denote the set $\{j \in M^0 \mid d_i(j) \leq \gamma\}$.
From \eqref{eq:val-ef}, for any $i'\in N$, we have $A^0_{i'} \subseteq M^0_{i', \leq \xi}$. 
Meanwhile, if $d_i(A^0_{i'}) < \xi + 2\tau$, then it holds that $A^0_{i'} \subseteq M^0_{i, \leq \xi + 2\tau}$; together with the previous sentence, this implies that $|M^0_{i', \leq \xi} \cap M^0_{i, \leq \xi + 2\tau}| \geq r$. 
Hence, for any distinct $i, i' \in N$, we have
\begin{align*}
&\Pr[d_i(A^0_{i'}) < \xi + 2\tau] \\
&\leq \Pr[|M^0_{i', \leq \xi} \cap M^0_{i, \leq \xi + 2\tau}| \geq r] \\
&\leq \Pr[|M^0_{i', \leq \xi} \cap M^0_{i, \leq \xi + 2\tau}| \geq 2] \\
&= \Pr[\exists \text{ distinct }j, j' \in M^0, j, j' \in M^0_{i', \leq \xi} \cap M^0_{i, \leq \xi + 2\tau}] \\
&\leq \sum_{\text{distinct }j, j' \in M^0} \Pr[j, j' \in M^0_{i', \leq \xi} \cap M^0_{i, \leq \xi + 2\tau}],
\end{align*}
where the last inequality follows from the union bound.

Notice that each chore belongs to $M^0_{i', \leq \xi}$ and $M^0_{i, \leq \xi + 2\tau}$ independently with probability $\Pr_{X \sim \cD}[X \leq \xi] \leq \beta \xi$ and $\Pr_{X \sim \cD}[X \leq \xi + 2 \tau] \leq \beta(\xi + 2 \tau)$, respectively. 
Plugging this into the estimate above, we get
\begin{align*}
\Pr[d_i(A^0_{i'}) < \xi + 2\tau]  
&\leq (rn)^2 \cdot \left(\beta \xi\right)^2\left(\beta(\xi + 2\tau)\right)^2 \\
&\leq O\left(\frac{r^6 \log^4(rn)}{n^2}\right),
\end{align*}
where the latter inequality follows from $\xi, \tau \leq O\left(\frac{r \log(rn)}{n}\right)$. 
Plugging this back into \eqref{eq:expand-small-LHS-prob}, we obtain
\begin{align*}
\Pr[|N^1| < n/2] 
&\leq O\left(\frac{1}{n} \cdot n^2 \cdot \frac{r^6 \log^4(rn)}{n^2}\right) \\
&= O\left(\frac{r^6 \log^4(rn)}{n}\right),
\end{align*}
which is $o(1)$ due to our assumption that $r \leq n^{1/7}$.
\end{proof}

\section{Proportionality}

In this section, we turn our attention to proportionality (\Cref{thm:prop-ub-main,thm:prop-lb}).

\subsection{Non-Existence}

We start with our non-existence result, whose simple proof follows precisely the outline in \Cref{sec:overview}.

\begin{proof}[Proof of \Cref{thm:prop-lb}]
Consider the first chore~$1$. 
Notice that if $d_i(1) > m/n$ for all agents $i \in N$, then no proportional allocation exists, since the agent who is assigned this chore will incur disutility greater than $m/n \geq d_i(M)/n$. 
Thus, 
\begin{align*}
\Pr&[\text{No proportional allocation exists}] \\
&\geq \Pr[\forall i \in N, d_i(1) > m/n] \\
&= \prod_{i \in N} \Pr[d_i(1) > m/n] \\
&\geq \prod_{i \in N} \left(1 - \beta \cdot \frac{m}{n}\right) \\
&\geq \prod_{i \in N} \frac{1}{1 + 2\beta \cdot \frac{m}{n}} \\
&\geq e^{-2\beta  m},
\end{align*}
where the second inequality follows from the $(\alpha, \beta)$-PDF-boundedness of $\cD$, the third inequality from our assumption $2\beta\cdot\frac{m}{n}\le 1$ and the bound $(1+x)(1-x/2) \ge 1$ which holds for all $x\in[0,1]$, and the last inequality from the bound $1+x \le e^x$ which holds for all real numbers $x$.
\end{proof}

\subsection{Existence}

Next, we establish the existence of proportional allocations (\Cref{thm:prop-ub-main}).
As discussed in \Cref{sec:overview}, we first give an algorithm for the ``small-$m$'' case, shown as \Cref{alg:matching-prop}.

\begin{algorithm}
\caption{Matching for Proportional Allocation}
\label{alg:matching-prop}
\begin{algorithmic}[1]
\Procedure{$\algpropsmall(N, M, \{d_i\}_{i\in N})$}{}
\State $E = \{(i, j) \in N \times M \mid d_i(j) \leq d_i(M)/n \text{ and } j = \argmin_{j' \in M} d_i(j')\}$
\If{a right-saturated matching exists in the graph $G = (N, M, E)$}
\State $A \gets$ the allocation corresponding to a  right-saturated matching
\State \Return $A$
\Else 
\State \Return NULL
\EndIf
\EndProcedure
\end{algorithmic}
\end{algorithm}

\begin{theorem} \label{thm:prop-small-m}
Let $\cD$ be an $(\alpha, \beta)$-PDF-bounded distribution with mean $\mu$, and let $\frac{m}{\log m} \geq \frac{80}{\alpha \mu}$ and $m \log m \leq n/40$.
There exists a polynomial-time algorithm $\algpropsmall$ that finds a proportional allocation with probability at least $1 - 1/m^4$. 
\end{theorem}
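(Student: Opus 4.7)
The plan is to argue that the bipartite graph $G = (N, M, E)$ constructed by $\algpropsmall$ contains a right-saturated matching with probability at least $1 - 1/m^4$; whenever it does, the output allocation assigns each agent at most one chore $j$ satisfying $d_i(j) \leq d_i(M)/n$ and is therefore proportional. Since every left vertex in $G$ has degree at most one (an agent is connected only to her favorite chore, and only if the threshold condition holds), a right-saturated matching exists if and only if no right vertex is isolated: the edge sets incident to distinct chores involve disjoint sets of agents, so any non-isolated chore can be matched to an arbitrary agent pointing to it. Writing $j^*_i := \argmin_{j' \in M} d_i(j')$ and $q := \Pr[d_i(j^*_i) \leq d_i(M)/n]$, the symmetry of $\cD^m$ over the chores gives $\Pr[(i,j) \in E] = q/m$ for every pair $(i,j)$; edges incident to distinct agents are independent, so $\Pr[j \text{ isolated}] = (1 - q/m)^n \leq e^{-qn/m}$, and a union bound yields $\Pr[\text{some chore isolated}] \leq m \cdot e^{-qn/m}$.

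The heart of the proof is to lower-bound $q$ by an absolute constant. I would introduce the auxiliary events $A := \{d_i(M) \geq m\mu/2\}$ and $B := \{d_i(j^*_i) \leq m\mu/(2n)\}$, noting that $A \cap B$ implies $d_i(j^*_i) \leq d_i(M)/n$. The lower-tail Chernoff bound (\Cref{lem:chernoff}) applied to $d_i(M)$---a sum of $m$ i.i.d.\ $[0,1]$-bounded variables with mean $\mu$---with $\delta = 1/2$ gives $\Pr[A^c] \leq \exp(-m\mu/8)$; rearranging the hypothesis $m/\log m \geq 80/(\alpha\mu)$ as $m\mu \geq 80\log m/\alpha \geq 80\log m$ yields $\Pr[A^c] \leq 1/m^{10}$. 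For $B$, PDF-boundedness gives $\Pr_{X \sim \cD}[X \leq t] \geq \alpha t$, so $\Pr[B] \geq 1 - (1 - \alpha m\mu/(2n))^m \geq 1 - \exp(-\alpha m^2 \mu/(2n))$.

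Multiplying the two hypotheses $\alpha\mu m \geq 80\log m$ and $m/n \leq 1/(40\log m)$ produces $\alpha\mu m^2/n \geq 2$, so $\Pr[B] \geq 1 - 1/e$. By inclusion-exclusion, $q \geq \Pr[A \cap B] \geq \Pr[A] + \Pr[B] - 1 \geq 1/2$ as soon as $m$ exceeds a small constant, which the hypothesis $m/\log m \geq 80/(\alpha\mu) \geq 80$ easily ensures. Plugging $q \geq 1/2$ back into the earlier union bound and invoking $n/m \geq 40\log m$ once more gives $\Pr[\text{some chore isolated}] \leq m \cdot e^{-n/(2m)} \leq m \cdot e^{-20\log m} \leq 1/m^4$, as desired. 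The principal obstacle is the coupling between $d_i(j^*_i)$ and $d_i(M)$---the minimum is literally a summand of the sum, so one cannot treat them as independent. The strategy above sidesteps this by bounding $A$ and $B$ separately and combining only through a union bound on their complements; the constants $80$ and $40$ in the hypotheses are calibrated precisely so that both $\alpha\mu m^2/n$ and $n/(2m\log m)$ exceed the absolute constants that this decoupled argument requires.
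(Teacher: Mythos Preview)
Your argument has a genuine gap at the step ``Multiplying the two hypotheses $\alpha\mu m \geq 80\log m$ and $m/n \leq 1/(40\log m)$ produces $\alpha\mu m^2/n \geq 2$.'' The second hypothesis is an \emph{upper} bound on $m/n$, so it cannot be combined with the first to lower-bound the product $(\alpha\mu m)(m/n)$. For a concrete counterexample take $\alpha=\mu=1$, $m=10^3$, $n=10^{20}$: both hypotheses hold but $\alpha\mu m^2/n = 10^{-14}$. More fundamentally, the claim $q\geq 1/2$ is simply false in this regime: since $d_i(M)/n \leq m/n$, we have $q \leq \Pr[\min_j d_i(j) \leq m/n] \leq \beta m^2/n \to 0$ as $n\to\infty$ with $m$ fixed, and the theorem places no upper bound on $n$. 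Your decoupling also collapses here: once $n$ is large enough, $\Pr[B] \leq m\cdot\beta\cdot m\mu/(2n)$ is far smaller than the subtractive loss $\Pr[A^c]\leq m^{-10}$, so $\Pr[A\cap B] \geq \Pr[B] - \Pr[A^c]$ becomes vacuous.

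The paper's proof handles this by choosing the threshold to scale with $n$ rather than with $m$: it takes $\tau$ with $F_{\cD}(\tau)=20\log m/n$ and, for each fixed chore $j$, lower-bounds $\Pr[(i,j)\in E]$ by the probability of the intersection $\{d_i(j)<\tau\}\cap\{d_i(j')\geq\tau\ \forall j'\neq j\}\cap\{d_i(M\setminus\{j\})\geq n\tau\}$. The first two events contribute $\frac{20\log m}{n}\bigl(1-\frac{20\log m}{n}\bigr)^{m-1}\geq \frac{10\log m}{n}$; the third is handled by \emph{positive correlation} (both it and the second event are increasing in $(d_i(j'))_{j'\neq j}$) followed by Chernoff, which avoids any subtractive loss. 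This yields $\Pr[(i,j)\in E]\geq 5\log m/n$ uniformly in $n$, so $(1-5\log m/n)^n\leq m^{-5}$ no matter how large $n$ is. If you want to stay closer to your symmetry-based framework, the fix is not to show $q$ is constant but to show $qn/m \geq \Omega(\log m)$ directly; doing so still requires replacing the subtraction $\Pr[B]-\Pr[A^c]$ by a multiplicative (correlation- or conditioning-based) argument.
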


\begin{proof}
First, note that if \Cref{alg:matching-prop} does not return NULL, then the output allocation must be proportional---this is because either $d_i(A_i) = 0$ (agent $i$ does not receive any chore) or $d_i(A_i) \leq d_i(M) / n$ (agent $i$ receives one chore corresponding to an edge in $E$). 
Thus, it suffices to show that a right-saturated matching exists in $G$ with probability at least $1 - 1/m^4$.

Observe that each $i \in N$ has degree at most one, since it can be connected to only $\argmin_{j' \in M} d_i(j')$. 
This means that there exists a right-saturated matching in $G$ if and only if no right vertex is isolated (i.e., has degree zero). 
Hence, by the union bound, we have
\begin{align}
&\Pr[\text{No right-saturated matching exists in }G] \nonumber \\
&\leq \sum_{j \in M} \Pr[j \text{ is isolated in } G] \nonumber \\
&= \sum_{j \in M} \Pr[\forall i \in N, (i, j) \notin E] \nonumber \\
&= \sum_{j \in M} \prod_{i \in N} \Pr[(i, j) \notin E] \nonumber \\
&= \sum_{j \in M} \prod_{i \in N} \left(1 - \Pr[(i, j) \in E]\right). \label{eq:matching-prob-prop}
\end{align}
Let $F_{\cD}$ denote the cumulative distribution function of $\cD$, and let $\tau$ be such that $F_{\cD}(\tau) = \frac{20 \log m}{n}$. 
Note that the $(\alpha, \beta)$-PDF-boundedness of $\cD$ implies that $\tau \leq \frac{1}{\alpha} \cdot \frac{20 \log m}{n}$.

Fix $i \in N$ and $j \in M$.
We can bound $\Pr[(i, j) \in E]$ as follows:
\begin{align*}
&\Pr[(i, j) \in E] \\
&= \Pr[d_i(j) \leq d_i(M)/n  \text{ and } j = \argmin_{j' \in M} d_i(j')] \\
&\geq \Pr[d_i(j) < \tau \text{ and } \forall j' \in M \setminus \{j\}, d_i(j') \geq \tau \\
&\qquad\quad \text{ and } d_i(M \setminus \{j\}) \geq n \cdot \tau] \\
&= \Pr[d_i(j) < \tau] \cdot \prod_{j' \in M \setminus \{j\}} \Pr[d_i(j') \geq \tau] \\
&\quad \cdot \Pr[d_i(M \setminus \{j\}) \geq n \cdot \tau \mid \forall j' \in M \setminus \{j\}, d_i(j') \geq \tau] \\
&= \frac{20 \log m}{n} \cdot \left(1 - \frac{20\log m}{n}\right)^{m - 1} \\
&\quad \cdot \Pr[d_i(M \setminus \{j\}) \geq n \cdot \tau \mid \forall j' \in M \setminus \{j\}, d_i(j') \geq \tau] \\
&\geq \frac{10 \log m}{n} \\
&\quad\cdot \Pr[d_i(M \setminus \{j\}) \geq n \cdot \tau \mid \forall j' \in M \setminus \{j\}, d_i(j') \geq \tau],
\end{align*}
where the second equality follows from the independence across different items, the last equality from our choice of $\tau$, and the last inequality from Bernoulli's inequality along with our assumption that $n/40 \geq m \log m$.

Now, observe that $d_i(M \setminus \{j\}) \geq n \cdot \tau$ and $d_i(j') \geq \tau$ for $j'\in M\setminus\{j\}$ are positively correlated events. 
Applying this to the bound above, we get
\begin{align}
\Pr[(i, j) \in E] \geq \frac{10 \log m}{n}\cdot \Pr[d_i(M \setminus \{j\}) \geq n \cdot \tau]. \label{eq:edge-prop-prob}
\end{align}
Note that $d_i(M \setminus \{j\}) = \sum_{j' \in M \setminus \{j\}} d_i(j')$ is a sum of $m - 1$ independent random variables; its expectation is $(m - 1)\mu \geq m \mu / 2$. 
From our condition on $m$, we have $m\mu/2 \geq {40 \log m} \cdot 1/\alpha \geq 2n\tau$, so $(m-1)\mu/2 \ge n\tau$. 
Hence,
\begin{align}
 \Pr[d_i(M \setminus \{j\}) \geq n \cdot \tau] 
 &\geq \Pr\left[d_i(M \setminus \{j\}) \geq \frac{(m - 1)\mu}{2}\right] \nonumber \\
  &\geq  1 - \exp\left(-\frac{(m-1)\mu}{8}\right) \nonumber  \\
  &\geq  1 - \exp\left(-\frac{m\mu}{16}\right) \nonumber  \\
  &\geq  1 - \exp\left(-\frac{5\log m}{\alpha}\right) \nonumber  \\
  &\geq  1 - \exp\left(-5\log m\right) \nonumber  \\
  &=  1 - m^{-5} \nonumber  \\
 &\geq \frac{1}{2}, \label{eq:prop-sum-prob}
\end{align}
where the second inequality follows from \Cref{lem:chernoff} and the fourth inequality from our assumption that $\frac{m}{\log m} \ge \frac{80}{\alpha\mu}$.

Finally, combining \eqref{eq:matching-prob-prop}, \eqref{eq:edge-prop-prob}, and \eqref{eq:prop-sum-prob}, we get
\begin{align*}
&\Pr[\text{No right-saturated matching exists in }G] \\
&\leq m \cdot \left(1 - \frac{5 \log m}{n}\right)^n \leq m \cdot e^{-5 \log m} = \frac{1}{m^4},
\end{align*}
where the second inequality follows from the inequality $1 + x \le e^x$, which holds for all real numbers $x$.
This concludes our proof.
\end{proof}

We are now ready to prove \Cref{thm:prop-ub-main} by considering whether the value of $m$ is ``small'', ``large'', or ``medium''.

\begin{proof}[Proof of \Cref{thm:prop-ub-main}]
Suppose that the distribution $\cD$ is $(\alpha, \beta)$-PDF-bounded with mean $\mu$. 
Note that the $(\alpha, \beta)$-PDF-boundedness of $\cD$ implies that
\[
\mu = \int_{y=0}^1 f_{\cD}(y) \cdot y \diff y \geq \int_{y=0}^1 \alpha y \diff y = \frac{\alpha}{2}.
\]
We will assume throughout that $m$ is sufficiently large---more specifically, that $\frac{m}{\log m} \geq \frac{80}{\alpha \mu}$. 
Consider the following three cases.
\begin{itemize}
\item \underline{Case I}: $m \log m \leq n/40$. In this case of ``small'' $m$, we may directly apply \Cref{thm:prop-small-m}.
\item \underline{Case II}: $m \geq 2n$. In this case of ``large'' $m$, we may apply \Cref{thm:upper-main}, which allows us to find an envy-free allocation with high probability. 
Since an envy-free allocation is also proportional, we get the desired result.
\item \underline{Case III}: $m \log m > n/40$ and $m < 2n$. 
In this case of ``medium'' $m$, let $m_0$ be the largest integer such that $m_0 \log m_0 \leq n/40$. 
Note that $m_0 = \Theta\left(n/\log n\right)$. Our algorithm works as follows.
\begin{itemize}
\item Let $r = \lceil n/m_0 \rceil$.
\item Partition $M$ into $M^1, \dots, M^r$, each of size either $\lceil n/r \rceil$ or $\lfloor n/r \rfloor$.
\item Apply $\algpropsmall$ (\Cref{alg:matching-prop}) to each $M^i$ to obtain an allocation $A^i = (A^i_1, \dots, A^i_n)$.
\item Output $A = (A^1_1 \cup \cdots \cup A^r_1, \dots, A^1_n \cup \cdots \cup A^r_n)$.
\end{itemize}
Observe that if each allocation $A^i$ for $i\in\{1,\dots,r\}$ is proportional with respect to the set of chores $M^i$, then the final allocation $A$ is also proportional with respect to the set of all chores $M$. 
Now, by our choice of $r$, we have $n/r \leq m_0$; since $m_0$ is an integer, this means that $\lceil n/r\rceil \le m_0$.
It follows that $|M^i| \log |M^i| \leq n/40$. 
Furthermore, $\frac{|M^i|}{\log |M^i|} = \Theta\left(\frac{n}{\log^2 n}\right)$, which is at least $\frac{80}{\alpha \mu}$ for any sufficiently large $n$. 
Thus, we may apply \Cref{thm:prop-small-m} to each invocation of $\algpropsmall$, which implies that the probability that $A^i$ is proportional with respect to $M^i$ is at least $1 - 1/|M^i|^4 \geq 1 - O(1/m_0^4)$.
Thus, the probability that the above algorithm fails to find a proportional allocation is at most 
\begin{align*}
O\left(r \cdot \frac{1}{m_0^4}\right) 
&= O\left(\frac{n}{m_0} \cdot \frac{1}{m_0^4}\right) \\
&= O\left(\frac{\log^5 n}{n^4}\right) = o(1),
\end{align*}
as desired. \qedhere
\end{itemize}
\end{proof}

\section{Discussion}

In this work, we have derived asymptotically tight bounds on the existence of envy-free and proportional chore allocations when the disutilities are sampled from a probability distribution.
Interestingly, these bounds reveal that far fewer items are required to achieve fairness when dealing with chores in comparison to goods.
We remark that our results also have implications on two other important fairness notions: \emph{maximin share (MMS) fairness} \citep{Budish11,AzizRaSc17} and \emph{envy-freeness up to any chore (EFX)} \citep{AzizCaIg22,KobayashiMaSa25}.
Specifically, for each of these notions, an allocation satisfying it exists with high probability for any relation between $m$ and $n$.
In more detail, for MMS fairness, if $m\le n$ then every allocation that gives at most one chore to each agent is MMS-fair, while if $m > n$, since proportionality is stronger than MMS fairness, \Cref{thm:prop-ub-main} implies that an MMS-fair allocation is again likely to exist.
Similarly, for EFX, if $m\le 2n$, it is known that an EFX allocation always exists \citep{KobayashiMaSa25}; if $m > 2n$, since envy-freeness is stronger than EFX, \Cref{thm:upper-main} implies that an EFX allocation exists with high probability.

A technical question left by our work is to tighten the gap in the constant factor between \Cref{thm:upper-main,thm:lb-main}, in particular, to determine whether the factor of~$2$ in \Cref{thm:upper-main} can be improved.
One could also consider allocating a combination of goods and chores \citep{AzizCaIg22}---\citet{BenadeHaPs24} presented results on this model but only for two agents.
Finally, it would be interesting to examine whether similar results continue to hold if we relax certain assumptions on the model, as has been done in the case of goods \citep{BaiFeGo22,BaiGo22,BenadeHaPs24}.

\section*{Acknowledgments}

This work was partially supported by the Singapore Ministry of Education under grant number MOE-T2EP20221-0001 and by an NUS Start-up Grant.
We thank the anonymous reviewers for their valuable feedback.


\bibliographystyle{named}
\bibliography{ijcai25}

\appendix

\end{document}